\long\def\remove#1{}
\newtheorem{theorem}{Theorem}[section] 
\newtheorem{lemma}[theorem]{Lemma}
\newtheorem{property}{Property}
\newtheorem{remark}[theorem]{Remark}
\newtheorem{claim}[theorem]{Claim}
\newtheorem{definition}[theorem]{Definition}
\newcommand{\yusuedit}[1] {\textcolor{red}{#1}}
\newcommand {\mm}[1] {\ifmmode{#1}\else{\mbox{\(#1\)}}\fi}
\newcommand{\hatP}   {{\widehat{P}}}
\newcommand{\hatalpha}  {{\widehat{\alpha}}}
\title{Differentiable Extensions with Rounding Guarantees for Combinatorial Optimization over Permutations}
\author{%
  Robert R. Nerem\\
  UCSD\\
  \texttt{rrnerem@ucsd.edu} \\
  \And
  Zhishang Luo\\
  UCSD \\
  \texttt{zluo@ucsd.edu} \\
  \And
  Akbar Rafiey\\
  NYU \\
  \texttt{ar9530@nyu.edu} \\
  \And
  Yusu Wang\\
  UCSD\\
  \texttt{yusuwang@ucsd.edu}\\
}
\begin{document}

\maketitle

\begin{abstract}
  Continuously extending combinatorial optimization objectives is a powerful technique commonly applied to the optimization of set functions. However, few such methods exist for extending functions on permutations, despite the fact that many combinatorial optimization problems, such as the quadratic assignment problem (QAP) and the traveling salesperson problem (TSP), are inherently optimization over permutations. We present Birkhoff Extension (BE), an almost-everywhere-differentiable continuous polytime-computable extension of any real-valued function on permutations to doubly stochastic matrices. Key to this construction is our introduction of a continuous variant of the well-known Birkhoff decomposition. 
Our extension has several nice properties making it appealing for optimization problems. 
First, BE provides a rounding guarantee, namely any solution to the extension can be efficiently rounded to a permutation without increasing the function value. Furthermore, an approximate solution in the relaxed case will give rise to an approximate solution in the space of permutations. 
Second, using BE, any real-valued optimization objective on permutations can be extended to an almost-everywhere-differentiable objective function over the space of doubly stochastic matrices. 
This makes our BE amenable to not only gradient-descent based optimization, but also unsupervised neural combinatorial optimization where training often requires a differentiable loss. 
Third, based on the above properties, we present a simple optimization procedure which can be readily combined with existing optimization approaches to offer local improvements (i.e., the quality of the final solution is no worse than the initial solution). 
Finally, we also adapt our extension to optimization problems over a class of trees, such as Steiner tree and optimization-based hierarchical clustering.
We present experimental results to verify our theoretical results on several combinatorial optimization problems related to permutations. 
\end{abstract}

\section{Introduction}

Continuously extending combinatorial objectives is a common technique in combinatorial optimization, e.g., relaxation through linear programming, which can offer efficient optimization algorithms. 
Continuous extensions are particularly useful if they are differentiable, making them amenable to gradient-based optimization methods. 
However, it is often non-trivial to develop continuous extensions with theoretical guarantees that relate the optimization of the extension to the optimization of the combinatorial objective.
In this paper, we consider combinatorial optimization problems where the goal is to minimize real-valued functions over {\bf permutations}. We aim to develop extensions for functions on permutations with theoretical guarantees that allow for gradient-based optimization of these combinatorial objectives. 

Optimization of functions on permutations is a setting that encompasses many combinatorial optimization problems with important applications. One of the most famous permutation optimization problems is the traveling salesperson problem (TSP), where the aim is to find the order in which a set of $n$ cities should be visited in a tour to minimize the length of the tour. TSP is an example of a vertex ordering problem, a class that contains many permutation optimization problems, in which the goal is to find an order of vertices in a graph that minimizes some objective. Examples of such problems are: directed feedback arc set (DFASP), graph cutwidth, and minimum linear arrangement (MLA). Another essential permutation optimization problem is the quadratic assignment problem (QAP), in which a bijection between $n$ facilities and  $n$ locations is sought that minimizes a quadratic objective function (such bijections can be identified with permutations).

\subsection{Our work}  
In this paper we develop techniques that solve combinatorial optimization problems over permutations.  Formally, let $\mathcal{P}_n$ denote the set of $n\times n$ permutation matrices (w.l.o.g.\ permutations are viewed as matrices). For a given real-value objective function $f:\mathcal{P}_n\to \mathbb R$ the goal is find
\begin{align}
\label{eq:obj}
    \min f(P) \quad \text{s.t.} \quad P\in \mathcal P_n.
\end{align}
Searching over the discrete space of permutation matrices is computationally prohibitive. It therefore is appealing to turn our focus to a continuous space, such as the convex hull of permutation matrices $\mathcal D_n.$ This space, known as the Birkhoff polytope, is the set of matrices that are doubly stochastic, i.e., have non-negative entries and rows and columns summing to one. The seminal result of Birkhoff \cite{Birkhoff} states that any matrix in the Birkhoff polytope can be decomposed into a convex combination of permutation matrices. This provides a probabilistic interpretation: any matrix $A$ in the Birkhoff polytope arises as a distribution $\nu(A)$ over permutation matrices $A = \mathbb E_{P \sim \nu(A)}P$. We can define our Birkhoff extension as $F(A) = \mathbb{E}_{P\sim\nu(A)}[f(P)]$ and solve
\begin{align}
\label{eq:obj-relaxed}
    \min F(A) \quad \text{s.t.} \quad A\in \mathcal D_n.
\end{align}
The challenge is that a Birkhoff decomposition (the distribution $\nu(A)$) is not unique and the resulting extension may not be continuous and differentiable. Indeed, previous techniques for performing Birkhoff decomposition \cite{Birkhoff, Dufoss2016} are not continuous.

\textbf{A differentiable Birkhoff extension.} Our first theoretical contribution is a Birkhoff decomposition that is continuous and almost everywhere (a.e.) differentiable (Thm. \ref{thm:cont Birkhoff}), which gives rise to an extension that is also continuous and a.e.\ differentiable (Property \ref{prop:continuity}). Continuity is achieved by utilizing an arbitrary but fixed total order of permutations and decomposing according to this order. 
However, computing the extension with respect to an arbitrary ordering of permutations is intractable. To solve this, we introduce a score matrix $S$, order permutations by their inner product with $S$, and show that the resulting continuous and differentiable \textit{Birkhoff decomposition / extension} can be efficiently computed (Property \ref{prop: computation}).

\textbf{Optimization-enabling properties.} Another appealing strength of Birkhoff extension is that minima of the extension $F$ directly correspond to minima of the function $f$ over permutations (Property \ref{prop:minima}). 
  Moreover, Birkhoff extension admits a scheme (Property \ref{prop:rounding}) for rounding a doubly stochastic matrix $A$ to a permutation $P$ that is guaranteed not to degrade the quality of the solution, i.e., $f(P) \leq F(A)$. These properties ensure that optimizing (or approximating) the extension $F$ yields optimal (or approximate) solutions to  $f$.

 Different choices for $S$ yield different continuous Birkhoff extensions, which is a valuable flexibility. 
 Interestingly, we can choose any permutation $P$ (e.g, an approximate solution produced by a comparatively fast algorithm) to produce a \emph{score matrix} (Property \ref{prop: changing score} ), which can then be combined with optimization and rounding to produce solutions that are at least as good as $P$. That is, given any existing solution to a combinatorial optimization problem, we can use it as the score matrix for a Birkhoff extension to further improve it, thereby yielding a local improvement procedure.

\textbf{Optimization algorithm.} Given the a.e.\ differentiability of Birkhoff extension, we can compute its gradient. However,  gradient descent cannot be directly applied to optimization over the Birkhoff polytope, since after each step the resulting matrix may not be doubly stochastic. We propose a Frank-Wolfe algorithm that overcomes this issue and preserves  double stochasticity throughout optimization.  

Birkhoff extension is not necessarily convex and, thus, gradient-based optimization could converge to local minima. We alleviate this issue
by changing the score matrix whenever the optimization converges to a local minimum. Changing the score matrix changes the extension being optimized, potentially changing the extension to one where the current iterate is not at a local minimum, allowing further optimization. We further show that for specific changes to the score matrix (based on Property \ref{prop: changing score}), we are guaranteed the quality of the rounded solution \emph{does not decrease}.

\textbf{Application to neural combinatorial optimization.} Birkhoff extension can also be used for unsupervised neural combinatorial optimization where training often requires that the objective function is differentiable. Neural approaches are a promising new paradigm in combinatorial optimization as, unlike traditional techniques, they inherently leverage the distribution of problem instances being solved \cite{Bengio2018}. However, supervised neural combinatorial optimization is often prohibitively expensive as it requires computing exact solutions to create labels for training. Unsupervised learning, such as the set extension proposed in \cite{karalias2022neural, karalias2020erdos},  
circumvents this issue by removing the need for labels. In App.\ \ref{app:nn}, similar to \cite{karalias2022neural}, we propose an unsupervised neural approach based on our Birkhoff extension.  
The properties of Birkhoff extensions offer advantages for unsupervised learning in that rounding guarantees ensure the minima sought in training correspond well with the combinatorial objective.

\textbf{Beyond permutation functions.} It is compelling to consider when similar techniques can be applied to other combinatorial functions.  In App.\ \ref{app:trees}, we present analysis for applying Birkhoff extension to functions on rooted binary trees over a fixed set of leaves. Optimization of these tree functions arises in many combinatorial optimization problems such as Steiner tree \cite{chen2019dim} problems and hierarchical clustering \cite{Dasgupta2015}.

\textbf{Experiments.} In Section \ref{sec:experiments}, we perform experiments on quadratic assignment (QAP), traveling salesperson (TSP), and directed feedback arc set (DFASP) problems, showing Birkhoff extension is an effective approach for optimizing permutation functions. Our method is especially effective for the quadratic assignment problem (QAP), where it outperforms a range of Gurobi implementations and common heuristics.

To summarize, our \textbf{main contributions} are: 
\begin{itemize}[leftmargin=*,
                itemsep=2pt,
                parsep=0pt,
                partopsep=0pt,
                topsep=0pt] 
\item A novel Birkhoff decomposition, which has continuous,  a.e. differentiable, and efficiently computable coefficients.
\item A continuous a.e. differentiable extension of permutation functions to real-valued functions on the Birkhoff polytope, with  properties, such as rounding guarantees, desirable for optimization.
\item A theoretically-justified permutation-function optimization procedure that combines this extension with gradient-based optimization.
\item Combination of Birkhoff extension with an NN to yield an unsupervised neural optimizer (App.\  \ref{app:nn}).  
\end{itemize}

\subsection{Related work}

\textbf{Extensions and optimization.} The optimization literature often focuses on building extensions with desirable optimization properties, particularly convexity and concavity \cite{crama1993concave,murota1998discrete,tawarmalani2002convex}. A classical approach to extending a discrete set function $f:\{0,1\}^n\to \mathbb R$ is by computing the convex closure, which is the point-wise supremum over linear functions that lower bound $f$ \cite{falk1976successive,tawarmalani2002convex}.A detailed comparison  with this approach and ours is made in App.\ \ref{app:convex_closure}.
A prominent example of successful application of these methods to combinatorial optimization is submodular functions. The convex closure for a submodular function is identical to the Lov\'asz extension \cite{lovasz1983submodular}, also known as the Choquet integral in decision theory \cite{choquet1954theory}, which leads to polynomial-time algorithms for submodular minimization \cite{grotschel1981ellipsoid}. A series of works \cite{calinescu2011maximizing, vondrak2008optimal}, introduced and studied multilinear extension of submodular functions which results in approximation algorithms for certain constrained submodular maximization problems. See \cite{bach2019submodular} for further details on extensions of submodular functions. Convex extensions have been applied to a broader class of set functions beyond submodular ones \cite{el2020optimal}, as well as to combinatorial penalties with structured sparsity \cite{el2018combinatorial,obozinski2012convex}.

\textbf{Extensions and neural combinatorial optimization. }
Unsupervised learning for combinatorial optimization problems has recently attracted great attention \cite{amizadeh2019learning,karalias2020erdos,schuetz2022combinatorial-physics,toenshoff2021graph}. Many frameworks based on RL \cite{khalil2017learning,kool2018attention,delarue2020reinforcement,bello2016neural,wang2021bi} or supervised learning \cite{khalil2016learning,vinyals2015pointer,gasse2019exact} do not hold any special requirements on the formulation of combinatorial problems. However, these approaches often suffer from dependence on labeled data or unstable training, respectively. In contrast, unsupervised learning for combinatorial optimization problems, where continuous relaxations of discrete objectives are utilized, is superior in its faster training, good generalization, and strong capability of dealing with large-scale problems. The general idea is to use, as a loss function, a function on a continuous domain that extends the discrete function.  Notable examples of these types of work are \cite{karalias2020erdos, wang2022unsupervised, Bu2024UCom2} where a probabilistic relaxation of discrete functions are used for the loss.

  \textbf{Neural set function extension. } Karalias et al.\ \cite{karalias2022neural} propose several continuous and a.e.\ differentiable extensions for training neural networks to optimize \emph{set functions}, such as for the Max Clique and Max Independent Set problems. These extensions are formed by convex combinations, which allows for the efficient rounding schemes.  However, many combinatorial optimization problems, such as the optimization of permutation functions, have no natural formulation as set function optimization. We use this convex decomposition framework as inspiration for our extension of permutation functions. Karalias \cite{karalias2023thesis} suggests the possibility of using the standard greedy algorithm for computing Birkhoff decompositions \cite{Dufoss2016} to yield a differentiable extension of permutation functions. However, such an extension would lack continuity, which is a significant barrier to its optimization\footnote{In the standard max--min greedy Birkhoff algorithm, the discontinuity comes from the choice of permutation at each step: there are residual matrices $A^{(t)}$ where two permutations $P$ and $Q$ tie for the maximal bottleneck value, and an arbitrarily small perturbation of $A^{(t)}$ breaks this tie in either direction. Consequently, the update $A^{(t+1)}$ can jump between $A^{(t)} - \alpha P$ and $A^{(t)} - \alpha Q$ under an infinitesimal perturbation of $A^{(t)}$.
}. Finally, using techniques similar to neural set function extensions, \cite{karalias2025geometric} gives extensions for combinatorial optimization with various constraints such as cardinality constraints. 
  
\section{Birkhoff Extension}
In this section we introduce our continuous and a.e.\ differentiable Birkhoff decomposition. We then use this decomposition to construct an extension of permutation functions to the Birkhoff polytope. Finally, we show this extension has many advantageous properties. Proofs of the claims made in this section are given in App.\ \ref{app:proofs}.
\subsection{Preliminaries}

A \emph{doubly stochastic $n\times n$ matrix} is one with non-negative entries where each row and column sums to 1. A \emph{permutation matrix} is a special doubly stochastic matrix with binary entries and a single $1$ in every row and column.
The class of $n\times n$ doubly stochastic matrices is a convex polytope known as the Birkhoff polytope $\mathcal{D}_n$. The Birkhoff polytope lies in an $(n-1)^2$-dimensional affine subspace of $n^2$-dimensional Euclidean space defined by $2n-1$ independent linear constraints specifying that the row and column sums all equal 1. Let $\mathcal P_n$ denote the set of $n\times n$ permutation matrices. 

\begin{theorem}[Birkhoff decomposition \cite{Birkhoff}]
    Any doubly stochastic matrix $A \in \mathcal D_n$, can be decomposed as $
        A = \sum_{k = 1}^{M}\alpha_kP_k
$ 
    where $M < n^2-n+1$, $\alpha_k > 0$, $\sum_k\alpha_k = 1$, and $P_k \in \mathcal P_n$. 
\end{theorem}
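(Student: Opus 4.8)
The plan is to prove this constructively by a greedy extraction procedure, which simultaneously yields the decomposition and the bound on the number of terms. The two ingredients are: (i) every doubly stochastic matrix has a permutation ``hidden'' in its positive support, and (ii) subtracting an appropriately scaled copy of that permutation strictly reduces the number of positive entries while preserving, up to rescaling, double stochasticity. Iterating (i)--(ii) until the residual is itself a scalar multiple of a permutation produces the convex combination, and tracking the number of positive entries across iterations produces the count.

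For step (i), I would work with the bipartite \emph{support graph} $G$ on row-vertices and column-vertices of $A$, placing an edge $(i,j)$ whenever $A_{ij} > 0$; a perfect matching in $G$ is exactly a permutation $\sigma$ with $A_{i,\sigma(i)} > 0$ for all $i$. The key step is verifying Hall's condition: for any set $R$ of rows, let $N(R)$ be the set of columns it touches. Since each row sums to $1$ and every entry from a row of $R$ lies in a column of $N(R)$, we get $|R| = \sum_{i\in R}\sum_{j\in N(R)} A_{ij} \le \sum_{j\in N(R)}\sum_{i} A_{ij} = |N(R)|$, using that each column sums to $1$. Hence $|N(R)| \ge |R|$, and Hall's (equivalently K\"onig's) theorem supplies the desired permutation $\sigma$. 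This averaging argument is the conceptual crux of the existence part.

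For step (ii) and the recursion, set $\alpha = \min_i A_{i,\sigma(i)} > 0$ and $B = A - \alpha P_\sigma$, where $P_\sigma$ is the permutation matrix of $\sigma$. Then $B$ is nonnegative (we subtract only at positions where $A$ is at least $\alpha$), every row and column sum of $B$ equals $1-\alpha$, and at least one entry of $B$ attaining the minimum is now zero, so the support strictly shrinks. If $\alpha = 1$ then $A = P_\sigma$ and we stop; otherwise $B/(1-\alpha)$ is again doubly stochastic, and recursion gives $B/(1-\alpha) = \sum_k \beta_k Q_k$, whence $A = \alpha P_\sigma + \sum_k (1-\alpha)\beta_k Q_k$ is a convex combination of permutation matrices.

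Finally, for the bound I would track the number of positive entries $z$ of the successive residuals: it starts at $z \le n^2$, strictly decreases at each extraction, and equals $n$ exactly when the residual becomes a scalar multiple of a permutation. This alone yields $M \le n^2 - n + 1$; the strict inequality $M < n^2 - n + 1$ needs the refinement that a doubly stochastic matrix which is \emph{not} a permutation has at least $n+2$ positive entries, i.e.\ that exactly $n+1$ is impossible. A short degree-counting argument on $G$ establishes this: with $n+1$ edges exactly one row and one column have degree two and all others degree one, forcing a degree-one column whose unique entry equals $1$, which in turn makes one of the two entries of the degree-two row vanish, a contradiction. Since every residual before the last is a non-permutation with $\ge n+2$ positive entries, the strictly decreasing sequence of support sizes omits the value $n+1$, improving the count to $M \le n^2 - n$. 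I anticipate that this last combinatorial refinement is the main obstacle, since the naive greedy argument only gives $n^2 - n + 1$ and the stated sharp constant hinges precisely on ruling out support size $n+1$.
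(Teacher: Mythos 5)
Your proposal is correct, and its core is the same as the paper's treatment of this theorem: the paper proves it by the classical greedy algorithm (its Alg.~1), where at each step a permutation supported on the positive entries of the residual is found via Hall's marriage theorem, scaled by the minimum entry along that matching, and subtracted, so that the residual stays proportional to a doubly stochastic matrix while its support strictly shrinks. Where you go beyond the paper is in two places, and both additions are genuine. First, the paper simply asserts that the existence of a matching "is a consequence of $B_k$ being proportional to a doubly stochastic matrix and Hall's marriage theorem," whereas you verify Hall's condition explicitly by the row-sum/column-sum averaging argument $|R| = \sum_{i\in R}\sum_{j\in N(R)} A_{ij} \le \sum_{j \in N(R)}\sum_i A_{ij} = |N(R)|$. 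Second, and more substantively, the support-counting the paper actually carries out (in the prose following Alg.~1, and again as statement (iv) of Thm.~2 for its continuous variant) only yields the non-strict bound $M \le n^2 - n + 1$; the strict inequality $M < n^2 - n + 1$ in the theorem statement is inherited from the citation to Birkhoff rather than proved. Your refinement closes exactly this gap: the lemma that a doubly stochastic matrix cannot have exactly $n+1$ positive entries (by the degree count forcing two columns of the fractional row to each have degree at least two, while only one column can) shows every residual before the last has support at least $n+2$, so the strictly decreasing sequence of support sizes skips the value $n+1$ and the count improves to $M \le n^2 - n$. In short, the paper's version is shorter because it defers the sharp constant to the reference, while yours is self-contained and actually establishes the constant as stated.
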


To construct this decomposition we view $A$ as the biadjacency matrix of a bipartite graph with vertices $[n] \sqcup [n]$ and edges $(i,j)$ of weight $A(i,j)$. In this graph consider the following set of permutations, i.e.\ perfect matching in this graph, that do not have edges of weight zero. 
 \begin{definition} A permutation matrix $P \in \mathcal P_n$ is \emph{a perfect matching} of non-negative matrix $A$ iff  $P(i,j) = 1$  implies $A(i,j) > 0$. We denote the space of permutations that are perfect matchings of $A$ as $\mathcal P(A)$.
\end{definition}

The standard algorithm \cite{Birkhoff} for constructing such a decomposition is given in Alg.\ \ref{alg:birkhoff}. This is an iterative algorithm that starts from $B_0=A$ and, at each iteration $k$,  takes the matrix $B_k$ resulting from the previous step, which is proportional to a doubly stochastic matrix, and finds a permutation $P$ that is a perfect matching of $B_k$. The existence of such a matching is a consequence of $B_k$ being proportional to a doubly stochastic matrix and Hall's marriage theorem. Furthermore, this matching $P$ can be computed using a standard bipartite matching algorithm in $O(n^3)$ time. 
If $P$ is a perfect matching of $B_k$ and $\alpha$ is the value of the smallest entry $B_k(i,j)$ in $B_k$ such that $P(i,j) = 1$, then $B_{k+1} = B_k - \alpha P$ is a matrix proportional to a doubly stochastic matrix with one less non-zero entry than in $B_k$. Note that since $P$ is a matching of $B_k$, we have $\alpha > 0$. This process is repeated until the resultant matrix is the zero matrix.

\subsection{A continuous and a.e.\ differentiable Birkhoff decomposition}
\begin{figure}[h]
\begin{minipage}[t][5.75cm][t]{.5\textwidth}
\begin{algorithm}[H]
    \begin{algorithmic}
\Require $A \in \mathcal D_n$

\Ensure $\{(\alpha_k,P_k)\}_{k=1}^{M}$ s.t.\ $A = \sum_{k = 1}^{M}\alpha_kP_k$, \\ \hspace{-.45 cm} $\sum_k \alpha_k =1$, and $\alpha_k > 0$.\\\vspace{-.75 em}
\State $k\gets 1$, $B_0 \gets A$
\While{$B_k \neq 0$}
\State $P_k \gets P \in \mathcal P(B_k)$
\State $\alpha_k \gets \min_{ij}\{B_k(i,j) \mid P_k(i,j) = 1\}$
\State $B_{k+1} \gets B_k - \alpha_k P_k$
\State $k++$
\EndWhile
\State $M \gets k$
\State \Return $\{(\alpha_k,P_k)\}_{k=1}^{M}$
\end{algorithmic}
\caption{Classical Birkhoff decomposition \cite{Birkhoff}}
 \label{alg:birkhoff}
\end{algorithm}
\end{minipage}\hfill
\begin{minipage}[t][5.75cm][t]{.49\textwidth}
\begin{algorithm}[H] 
    \begin{algorithmic}
\Require $A \in \mathcal D_n$, identifying score matrix $S$

\Ensure $\{(\alpha_k,P_k)\}_{k=1}^{M}$ s.t.\ $A = \sum_{k = 1}^{M}\alpha_kP_k$, \\ \hspace{-.45 cm} $\sum_k \alpha_k =1$, and $\alpha_k > 0$.\\ \vspace{-.75 em}
\State $k \gets 1$, $B_0 \gets A$
\While{$B_k \neq 0$}
\State $P_k \gets \mathrm{argmax}_{P \in \mathcal P(B_k)} \langle P,S \rangle$
\State$\alpha_k \gets \min_{ij}\{B_k(i,j) \mid P_k(i,j) = 1\}$
\State $B_{k+1} \gets B_k - \alpha_k P_k$
\State $k++$
\EndWhile
\State $M \gets k$
\State \Return $\{(\alpha_k,P_k)\}_{k=1}^{M}$
\end{algorithmic}
\caption{Continuous Birkhoff decomposition}
\label{alg:cont birkhoff}
\end{algorithm}
\end{minipage}
\end{figure}
We extend a function $f: \mathcal P_n \to \mathbb R$ on permutations to a function $F: \mathcal D_n \to \mathbb R$ on Birkhoff polytope via the Birkhoff decomposition: $F(A) = \sum_k \alpha_kf(P_k)$. However, Birkhoff decomposition is non-unique; there may be many different ways to represent a doubly stochastic matrix as a convex combination of permutations. This non-uniqueness is evident at each step of the decomposition in the multiple choices of which permutation matrix $P \in \mathcal P(B_k)$ to subtract. We now describe how to fix a particular decomposition so that the  coefficients $\alpha_k$, and the extension $F$, are \emph{continuous} functions of the matrix $A$ being decomposed (with the standard $L_2$-induced topology on $\mathcal D_n$).

The key insight for this construction is to fix an arbitrary total order over permutation matrices and, at each step in the decomposition, always pick the valid permutation that comes first in the order. Previous decomposition algorithms fail to achieve continuity because small changes to the matrix $A$ being decomposed could change which permutation is subtracted at each step, which alter the trajectory of the decomposition. By fixing the order in which permutations are subtracted in the decomposition, we circumvent this issue. 
Below we introduce a continuous Birkhoff decomposition scheme, and prove its correctness (i.e., validity and continuity) in Thm.\ \ref{thm:cont Birkhoff}.

\begin{definition}[Continuous Birkhoff Decomposition] \label{def:cont birkhoff}
    Given an enumeration $\{P_\ell\}_{\ell =1}^{n!}$  of $ \mathcal P_n$ (i.e, fix a total order of all permutations), and given $A \in \mathcal D_n$, the continuous Birkhoff decomposition of $A$ induced by $\{P_\ell\}_{\ell =1}^{n!}$ is $(\alpha_\ell, P_\ell)_{\ell =1}^{n!}$ where the coefficients are defined recurrently from $\ell = 1$ to $n!$ in order  by
      \begin{equation} \label{eq:coeff}
          \alpha_\ell = \min_{ij}\left\{ A(i,j) - \sum_{m = 1}^{\ell - 1}\alpha_m P_m(i,j) \mid P_\ell(i,j) = 1 \right\}
      \end{equation}
\end{definition}
\begin{theorem}\label{thm:cont Birkhoff}
\yusuedit{}
      Given an enumeration $\{P_\ell\}_{\ell =1}^{n!}$ and given $A \in \mathcal D_n$, the coefficients of the continuous Birkhoff decomposition $(\alpha_\ell, P_\ell)_{\ell =1}^{n!}$ of  $A$
      are (i) Lipschitz continuous functions from $\mathcal D_n$ to $\mathbb R$, (ii) all non-negative and sum to 1, and (iii) yield a valid decomposition of $A$ via
$
        A = \sum_{\ell = 1}^{n!}\alpha_\ell P_\ell. $
Furthermore, (iv) there are {\bf at most} $n^2 - n +1$ coefficients being non-zero.
\end{theorem}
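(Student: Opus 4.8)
The plan is to track the residual matrices $B_\ell := A - \sum_{m=1}^{\ell-1}\alpha_m P_m$, so that by definition $\alpha_\ell = \min\{B_\ell(i,j) : P_\ell(i,j)=1\}$ and $B_{\ell+1} = B_\ell - \alpha_\ell P_\ell$, with $B_1 = A$. I would first establish a basic invariant by induction on $\ell$: each $B_\ell$ is entrywise non-negative and has all row sums and all column sums equal to the common value $c_\ell := 1 - \sum_{m<\ell}\alpha_m$. Non-negativity is immediate because $\alpha_\ell$ is precisely the minimum over the entries it is subtracted from, so subtracting it keeps those entries $\ge 0$ and leaves all others unchanged; the uniform row/column sums follow because $\alpha_\ell P_\ell$ removes exactly $\alpha_\ell$ from every row and every column. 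This invariant already yields $\alpha_\ell \ge 0$ (a minimum of non-negative quantities), giving the non-negativity half of (ii), and shows $B_\ell$ equals $c_\ell$ times a doubly stochastic matrix whenever $c_\ell > 0$.

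For the Lipschitz claim (i), I would argue by induction that each $\alpha_\ell$ is a Lipschitz function of $A$ on $\mathcal D_n$. The base case $\alpha_1 = \min\{A(i,j): P_1(i,j)=1\}$ is a minimum of finitely many coordinate projections, each $1$-Lipschitz, hence Lipschitz. For the inductive step, if $\alpha_1,\dots,\alpha_{\ell-1}$ are Lipschitz in $A$, then each entry $B_\ell(i,j) = A(i,j) - \sum_{m<\ell}\alpha_m P_m(i,j)$ is Lipschitz (a finite combination of Lipschitz functions, the $P_m(i,j)$ being constants), and $\alpha_\ell$ is a minimum of finitely many such entries, hence Lipschitz. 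The Lipschitz constants compound across the recursion but stay finite, which is all that is required. This step is routine once the inductive formulation is fixed.

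The crux is validity (iii) together with $\sum_\ell \alpha_\ell = 1$ in (ii), i.e.\ showing the recursion exhausts $A$: $B_{n!+1} = 0$. The key structural observations are that the residual support is non-increasing ($B_{\ell+1} \le B_\ell$ entrywise) and that each permutation is spent after its turn, namely after step $\ell$ the permutation $P_\ell$ is no longer a matching of $B_{\ell+1}$. Indeed, if $P_\ell$ is a matching of $B_\ell$ then $\alpha_\ell$ is the smallest of the positive entries it touches, driving at least one of them to $0$; if $P_\ell$ is not a matching of $B_\ell$ then $\alpha_\ell = 0$ and $B_{\ell+1} = B_\ell$ already has a zero at one such position. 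Either way, combined with support monotonicity, $P_\ell$ fails to be a matching of every $B_{\ell'}$ with $\ell' > \ell$. Now suppose $B_{n!+1} \neq 0$; since it is proportional to a non-zero doubly stochastic matrix, Birkhoff's theorem (equivalently Hall's marriage theorem applied to its support) supplies some permutation $P_m$ that is a matching of $B_{n!+1}$, and by support monotonicity $P_m$ would then also be a matching of $B_{m+1}$, contradicting the previous sentence. Hence $B_{n!+1}=0$, which is exactly the claimed decomposition; and since the common row sum falls from $c_1 = 1$ to $c_{n!+1}=0$, we obtain $\sum_\ell \alpha_\ell = 1$. I expect this exhaustion argument to be the main obstacle, since it is where the Birkhoff/Hall existence of a matching must be reconciled with the fixed-order bookkeeping.

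Finally, for the sparsity bound (iv) I would count non-zero entries of the residual. Whenever $\alpha_\ell > 0$, at least one entry of $B_\ell$ is set to $0$, so the support strictly shrinks. Moreover, at the last step with $\alpha_\ell > 0$ we have $B_{\ell+1}=0$ (all later coefficients vanish), so $B_\ell = \alpha_\ell P_\ell$ is proportional to a non-zero doubly stochastic matrix and hence has at least $n$ non-zero entries, all of which disappear in this single step. Thus if there are $t$ non-zero coefficients in total, the number of support entries removed is at least $(t-1) + n$; but this total equals the number of non-zero entries of $A$, which is at most $n^2$. Rearranging gives $t \le n^2 - n + 1$, as claimed.
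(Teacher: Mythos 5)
Your proposal is correct and follows essentially the same route as the paper's proof: the same inductive Lipschitz argument via minima of Lipschitz entry functions, the same residual invariants (non-negativity, uniform row/column sums, monotone support), the same Hall/Birkhoff contradiction for exhaustion --- your ``each permutation is spent after its turn'' lemma is exactly the paper's property that $\alpha_\ell > 0$ iff $P_\ell$ is a matching of the current residual, combined with support monotonicity --- and the same support-counting bound for (iv), differing only in that you count the $n$ entries cleared at the final positive step while the paper notes the residual would otherwise need at least $n$ surviving entries.
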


The a.e.\ differentiability of the continuous Birkhoff decomposition follows from Lipschitz continuity and an application of Rademacher's theorem \cite{Rudin}. 
\begin{theorem} \label{thm:ae diff}
    The coefficients $\{\alpha_\ell\}_{\ell = 1}^{n!}$ of the continuous Birkhoff decomposition are almost everywhere differentiable functions from $\mathcal D_n$ to $\mathbb R$.
\end{theorem}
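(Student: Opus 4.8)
The plan is to deduce almost-everywhere differentiability directly from the Lipschitz continuity already established in Theorem~\ref{thm:cont Birkhoff}(i), via Rademacher's theorem. Recall that Rademacher's theorem states that any locally Lipschitz function defined on an open subset of $\mathbb{R}^d$ is differentiable almost everywhere (with respect to Lebesgue measure). So the high-level strategy is: (1) recognize each coefficient $\alpha_\ell$ as a Lipschitz function, (2) address the technical mismatch that $\mathcal{D}_n$ is not an open subset of $\mathbb{R}^{n^2}$ but rather lies in a lower-dimensional affine subspace, and (3) invoke Rademacher on the correct ambient space and interpret ``almost everywhere'' relative to the natural measure on $\mathcal{D}_n$.

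First I would set up the correct domain. As noted in the Preliminaries, the Birkhoff polytope $\mathcal{D}_n$ lies in an $(n-1)^2$-dimensional affine subspace $V \subset \mathbb{R}^{n^2}$ cut out by the $2n-1$ independent row- and column-sum constraints, and $\mathcal{D}_n$ has nonempty interior \emph{relative to} $V$ (its relative interior consists of the strictly positive doubly stochastic matrices). I would fix an affine isomorphism $\phi: V \to \mathbb{R}^{(n-1)^2}$ identifying $V$ with Euclidean space of the correct dimension, so that the relative interior of $\mathcal{D}_n$ maps to an open set $U \subseteq \mathbb{R}^{(n-1)^2}$. The notion of ``almost everywhere'' on $\mathcal{D}_n$ is then taken with respect to the $(n-1)^2$-dimensional Lebesgue (Hausdorff) measure, which is the standard and natural choice since $\mathcal{D}_n$ is full-dimensional in $V$.

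Next I would transport the Lipschitz property through $\phi$. Since $\phi$ is an affine bijection between finite-dimensional spaces, both $\phi$ and $\phi^{-1}$ are Lipschitz, so the composition $\alpha_\ell \circ \phi^{-1} : U \to \mathbb{R}$ is Lipschitz on the open set $U$, by Theorem~\ref{thm:cont Birkhoff}(i). Rademacher's theorem then applies verbatim to $\alpha_\ell \circ \phi^{-1}$, yielding that it is differentiable at almost every point of $U$. Pulling back along the smooth (indeed affine) map $\phi$, and using that $\phi$ preserves Lebesgue null sets because it is a bi-Lipschitz affine map, I conclude that $\alpha_\ell$ is differentiable (as a function on the manifold $\mathcal{D}_n$, i.e.\ along $V$) at almost every point of $\mathcal{D}_n$. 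Finally, since there are only finitely many coefficients $\ell \in [n!]$ and a finite union of null sets is null, all coefficients are simultaneously differentiable away from a single measure-zero subset of $\mathcal{D}_n$.

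The main obstacle here is not analytical difficulty but bookkeeping about the ambient space: one must be careful that Rademacher's theorem requires an \emph{open} domain in Euclidean space, whereas $\mathcal{D}_n$ is a lower-dimensional polytope, so a naive application on $\mathbb{R}^{n^2}$ would trivially give the empty interior and a vacuous conclusion. The resolution is precisely the affine-chart reduction above, together with the observation that affine maps carry null sets to null sets, so differentiability ``almost everywhere in $\mathcal{D}_n$'' is well-defined and preserved under the coordinate change. Everything else is a routine invocation of the classical theorem.
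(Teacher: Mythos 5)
Your proposal is correct and follows essentially the same route as the paper's own proof: identify the affine subspace containing $\mathcal{D}_n$ with $\mathbb{R}^{(n-1)^2}$ via an affine (bi-Lipschitz) parameterization, apply Rademacher's theorem on the (relative) interior, and transfer the null set of non-differentiability back through the affine map, noting that such maps preserve null sets and that the relative boundary is itself null. The only cosmetic difference is the direction of the chart (the paper parameterizes $\mathcal{D}_n$ by a map from $\mathbb{R}^{(n-1)^2}$, while you map $\mathcal{D}_n$ into it), which changes nothing of substance.
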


 Now that we have constructed a continuous Birkhoff decomposition,
the next questions are how to represent that total order of all permutation matrices, and how to compute this decomposition efficiently. 
Indeed, for an arbitrary ordering of permutations, efficient computation is not feasible as it requires referencing the order of $n!$ elements. We overcome this challenge by instead focusing on orderings of permutations that arise from an inner product.  

\begin{definition}[Score-Induced Birkhoff Decompositions]
    Given an $n \times n$ matrix $S$, the \emph{score} of a permutation $P$ is
$
        \langle S, P\rangle = \sum_{ij} S(i,j)P(i,j).
$
    We call $S$ a \emph{score matrix}, and say that $S$ is \emph{identifying} if it assigns a unique score to every permutation, thereby inducing a total order on $\mathcal P_n$. 
    
    Furthermore, given an identifying score matrix $S$, the Birkhoff decomposition as specified in Def.\ \ref{def:cont birkhoff} with respect to an ordering of permutations by their score $\{P_\ell\}_{\ell=1}^{n!}$  is called an \emph{$S$-induced Birkhoff decomposition}. 
\end{definition}
A simple example of an identifying score matrix is given by $S(i,j) = 2^{(i+nj)}$. 
The score-matrix induced total order is particularly effective because for any $A \in \mathcal D_n$, the permutation $P \in \mathcal P(A)$ that comes first in this order can be found efficiently by solving a maximum weight matching problem. Consequently, decompositions with respect to this order can be constructed efficiently:

\begin{theorem} \label{thm:alg}
Given an identifying score matrix $S$, the $S$-induced Birkhoff decomposition can be computed in $O(n^5)$ time by Alg. \ref{alg:cont birkhoff}. 
\label{thm:scoreBDcomputation}
\end{theorem}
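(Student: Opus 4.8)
The plan is to establish two things: (a) \emph{correctness}, namely that Alg.~\ref{alg:cont birkhoff} outputs exactly the nonzero part of the $S$-induced decomposition of Def.~\ref{def:cont birkhoff}, and (b) the $O(n^5)$ running-time bound. Throughout I fix the convention that permutations are ordered by \emph{decreasing} score, so that $P_1$ is the unique maximizer of $\langle P,S\rangle$ (uniqueness holds because $S$ is identifying); this matches the convention that the permutation ``first in the order'' is the one returned by $\mathrm{argmax}$. The key conceptual observation is that Alg.~\ref{alg:cont birkhoff} is simply the \emph{sparse realization} of Def.~\ref{def:cont birkhoff}: the definition walks through all $n!$ permutations and assigns $\alpha_\ell = 0$ to most of them, whereas the algorithm jumps directly from one positive coefficient to the next.

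For correctness I would first make the reduction from the definition explicit. For a residual $B^{(\ell-1)} = A - \sum_{m<\ell}\alpha_m P_m$, Eq.~\ref{eq:coeff} gives $\alpha_\ell > 0$ precisely when every entry of $B^{(\ell-1)}$ on the support of $P_\ell$ is positive, i.e.\ exactly when $P_\ell \in \mathcal P(B^{(\ell-1)})$; otherwise $\alpha_\ell = 0$ and the residual is unchanged. Hence Def.~\ref{def:cont birkhoff} is equivalent to: scan permutations in decreasing score order, skip those that are not matchings of the current residual, and subtract those that are. I would then prove the central claim by induction on the extraction steps: \emph{the first not-yet-scanned matching of the current residual, in score order, is the global maximum-score matching} $\mathrm{argmax}_{P\in\mathcal P(B)}\langle P,S\rangle$ \emph{computed by the algorithm.} The engine is the monotonicity established inside the proof of Thm.~\ref{thm:cont Birkhoff}: since $B^{(\ell)} = B^{(\ell-1)} - \alpha_\ell P_\ell$ with $\alpha_\ell \ge 0$, entries of the residual are non-increasing in $\ell$ and, once an entry reaches $0$, it stays $0$ (properties (a) and (c) there). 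When a permutation $P_m$ is processed it is either skipped (some support entry was already $0$) or extracted (the $\mathrm{argmin}$ forces some support entry to $0$); in both cases that zero entry persists, so $P_m$ is \emph{never again} a matching of any later residual. Consequently no already-scanned (higher-score) permutation can belong to $\mathcal P(B)$ at a later step, so the highest-score matching of $B$ must be the next unscanned matching. This identifies the algorithm's choice with the definition's choice at every step; an induction then shows the two produce the identical sequence of pairs $(\alpha_k, P_k)$, and validity ($A = \sum_k \alpha_k P_k$, $\alpha_k > 0$, $\sum_k \alpha_k = 1$) is inherited from Thm.~\ref{thm:cont Birkhoff}.

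For the running time I would bound the number of loop iterations and the per-iteration cost separately. Each iteration produces one positive coefficient and, through the $\mathrm{argmin}$, zeroes at least one previously positive entry of $B$; since entries never return to being positive, the loop runs at most once per matrix entry, hence at most $n^2 - n + 1 = O(n^2)$ times by Thm.~\ref{thm:cont Birkhoff}(iv). The dominant per-iteration cost is computing $\mathrm{argmax}_{P\in\mathcal P(B)}\langle P,S\rangle$, which is a maximum-weight \emph{perfect} matching (assignment) problem on the bipartite graph whose edges are the positive entries of $B$, with edge weight $S(i,j)$; a perfect matching exists because $B$ is proportional to a doubly stochastic matrix (Hall's theorem, as in the Preliminaries), and the Hungarian algorithm solves it in $O(n^3)$. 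Computing $\alpha_k$ as a minimum over the $n$ support entries and updating $B \gets B - \alpha_k P_k$ each cost $O(n)$. Multiplying, the total is $O(n^2)\cdot O(n^3) = O(n^5)$.

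I expect the main obstacle to be the correctness equivalence rather than the timing: one must rule out the possibility that the algorithm's global $\mathrm{argmax}$ reselects a high-score permutation that the definition has already passed over. This is exactly where both the \emph{once-zero-stays-zero} monotonicity and the \emph{identifying} hypothesis on $S$ are indispensable---the former guarantees processed permutations drop out of $\mathcal P(B)$ permanently, and the latter guarantees the $\mathrm{argmax}$ (and hence the total order) is well defined with no ties. A secondary technical care is that the assignment problem must be posed over perfect matchings supported only on strictly positive entries of $B$ (edges with $B(i,j) = 0$ excluded, e.g.\ by assigning them weight $-\infty$), so that its optimum lies in $\mathcal P(B)$ rather than routing through a zero entry.
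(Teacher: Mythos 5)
Your proposal is correct and takes essentially the same approach as the paper's proof: you isolate the same key equivalence (that $\alpha_\ell > 0$ in Def.~\ref{def:cont birkhoff} exactly when $P_\ell$ is a matching of the current residual, the paper's property $(*)$), run the same induction identifying the algorithm's successive $\mathrm{argmax}$ choices with the definition's nonzero-coefficient terms, and obtain the runtime from the same $O(n^2)$ iteration bound times the $O(n^3)$ Hungarian-algorithm cost per maximum-weight matching on the bipartite graph of positive entries of $B$. The only cosmetic difference is that your ``once-zero-stays-zero'' persistence argument is the contrapositive of the paper's argument that an earlier-ordered permutation still matching the current residual would already have received a positive coefficient.
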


In practice, we may wish to use score matrices other than $S(i,j) = 2^{(i+nj)}$. The following theorem shows that random assignment of $S$ is sufficient for $S$ to be identifying. 
\begin{claim} \label{thm:random score}
    If the entries $S$ are independent absolutely continuous random variables $S(i,j) \in \mathbb R$ then $S$ is identifying almost surely. 
\end{claim}

\subsection{Properties of Birkhoff extension}
  \label{sec:Birk}
We present several properties that make our score-induced Birkhoff extension a desirable candidate for optimization. Proofs of these properties are given in App.\ \ref{app:proofs}

\begin{definition}
    Given $A \in \mathcal D_n$ and an ordering of permutations $\{P_\ell\}_{\ell =1}^{n!}$, let $(\alpha_k,P_k)_{k = 1}^{M}$ be the non-zero Birkhoff coefficients defined in Def.\ \ref{def:cont birkhoff}. For any $f: \mathcal P_n \to \mathbb R$, the \emph{Birkhoff extension of $f$} is the function $F: \mathcal D_n \to \mathbb R$ where 
   \begin{equation}
        F(A) = \sum_{k = 1}^M \alpha_k f(P_k).
   \end{equation}
   We say $F$ is \emph{score induced} or \emph{$S$-induced} if the ordering of permutations is induced by $S$. We sometimes emphasize the dependence on $S$ by using $F_S$ to denote the $S$-induced Birkhoff extension.  
\end{definition}
Almost everywhere differentiability and continuity of $F$ are essential for gradient-based optimization. 
 These properties follow from continuity and a.e.\ differentiability of the coefficients $\{\alpha_\ell\}_{\ell = 1}^{n!}$. Furthermore, we show that when computing the gradient of $F$ one only needs to consider the non-zero terms in the Birkhoff decomposition.   
\begin{property}\label{prop:continuity}
    Birkhoff extensions are Lipschitz continuous and almost everywhere differentiable. Furthermore if $L_+ = \{\ell \in[n!]:\alpha_\ell > 0\}$, then $
       \nabla_AF(A) = 
        \sum_{\ell \in L_+} (\nabla_A\alpha_\ell) f(P_\ell)$ almost everywhere. 
\end{property}

Computing a Birkhoff extension reduces to computing the corresponding Birkhoff decomposition, thus, Alg.\ \ref{alg:cont birkhoff} gives efficient computation of score-induced Birkhoff extensions (see Thm.\ \ref{thm:scoreBDcomputation}). 
\begin{property} \label{prop: computation}
    Score-induced Birkhoff extensions $F$ can be computed in $O(n^5)$ time. 
\end{property}

One concern when optimizing a continuous extension to a combinatorial function $f$ is that the optimization reaches some minimum in the extended space that does not correspond with minima of the combinatorial function (which is our true goal).
Property \ref{prop:minima} below shows that (global) minima of the extension $F$ (i.e., over $\mathcal{D}_n$) are related to those of $f$ over the permutations. 
\begin{property} \label{prop:minima} 
    Let $F$ be a Birkhoff extension of $f: \mathcal P_n \to \mathbb R$. Then (1) $\min_{P \in \mathcal P_n}f(P) = \min_{A \in \mathcal D_n}F(A)$ and (2) $\mathrm{argmin}_{A \in \mathcal D_n}F(A) \subseteq \mathrm{Conv}( \mathrm{argmin}_{P \in \mathcal P_n}f(P) )$.
\end{property}

The Birkhoff decomposition leads to a simple rounding strategy: 
\begin{definition}
    Given a matrix $A \in \mathcal D_n$ and a score matrix $S$ that induces a Birkhoff decomposition $(\alpha_k,P_k)_{k = 1}^{M}$, we define
    \begin{equation}
        \mathrm{round}_S(A) =  \mathrm{argmin}_{P_k: k \in[M]}(f(P_k)).
    \end{equation}
\end{definition}

Note that $\mathrm{round}_S(A)$ can be computed in $O(n^5)$ time by computing a Birkhoff decomposition of $A$. 
The rounding scheme is lossless in that it can \emph{only improve} solution quality; see Property \ref{prop:rounding}-1 below. Consequently, optimizing $f$ reduces to optimizing the Birkhoff extension $F$, as any minimum of $F$ can be used to derive a minimum of $f$. 
Furthermore, approximations to $\min_{A \in \mathcal D_n}F_S(A)$ can be rounded to approximations for $\min_{P \in \mathcal P_n}f(P)$. 
\begin{property} \label{prop:rounding}\label{prop: changing score}
    Let $F_S$ be a score-induced Birkhoff extension of $f: \mathcal P_n \to \mathbb R$. For any $A \in \mathcal D_n$, then 
    \begin{enumerate}[leftmargin=*,
                itemsep=0pt,
                parsep=0pt,
                partopsep=0pt,
                topsep=0pt]
        \item \sloppy $f(\mathrm{round}_S(A)) \leq F_S(A)$. Furthermore, if $A$ is a $C$-approximation for $\min_{A \in \mathcal D_n}F_S(A)$, then $\mathrm{round}_S(A)$ is a $C$-approximation for $\min_{P \in \mathcal P_n}f(P)$.
    \item If $P^* \in \mathcal P_n$ with $\mathrm{max}_{ij}|P^*(i,j) - S(i,j)| < \frac{1}{2n}$, then $
        f(\mathrm{round}_{S}(A)) \leq f(P^*).$
    \end{enumerate}
    
\end{property}

Another useful quality  of this rounding scheme, Property \ref{prop:rounding}-2, is that if $S$ is sufficiently close to some permutation $P$, then rounding always yields a solution at least as good as $P$. This holds \emph{independent of the matrix $A$} that is being rounded and gives useful flexibility to Birkhoff extension optimization. In particular if the score $S$ is close to an approximate solution to the combinatorial optimization problem, then rounding always produces a solution at least as good as the approximation. For example, if $S = P_\text{approx} + Q$ where $P_\text{approx}$ is solution produced by a fast approximation algorithm and $Q$ is a random noise matrix with entries in $[0,1/n^2]$.  (Adding $Q$ ensures that $S$ is almost surely identifying by Claim \ref{thm:random score}.) Then, gradient-based optimization of the Birkhoff extension associated with  $S$ finds solutions that are guaranteed to be no worse than the approximation $P_\text{approx}$. This means that we can use Birkhoff extension as a \emph{local improvement strategy} to potentially improve any given solution. 
It is important that $P_\text{approx}$ is used as the {\bf score matrix}, not as an initialization for the optimized matrix $A \in \mathcal D_n$. In fact, initializing $A$ at $P_\text{approx}$ yields weaker guarantees as optimization may produce a solution of worse quality in this case.

\subsection{Optimization procedure with dynamic score}
\label{subsec:opt}

Consider the  optimization problem 
$
    \min_{P\in \mathcal{P}_n} f(P)$
where the goal is to minimize a function $f:\mathcal{P}_n\to \mathbb R$ on permutations.  A natural relaxation for this optimization problem is to optimize the Birkhoff extension of $f$ over the set of doubly stochastic matrices; namely, the constrained optimization problem of the form $
    \min_{A\in \mathcal{D}_n} F(A).
$
Here we propose an iterative first-order optimization algorithm that is concerned with optimizing an objective function over the Birkhoff polytope $\mathcal{D}_n$. 

\textbf{Frank-Wolfe. } One difficulty in gradient-based approaches to optimize constrained optimization problems is the risk of stepping outside of the feasible region. We address this difficulty by adapting the famous Frank-Wolfe approach \cite{frank1956algorithm}. In contrast to \emph{projected gradient descent} approaches, the idea is to follow a direction of descent that is best aligned with the negative of the gradient for which we can also easily ensure feasibility. This is done via optimizing the negative of the gradient over the extreme vertices $\mathcal{P}_n\subset \mathcal{D}_n$ and then taking the obtained permutation as an alternative direction of descent. The overall process is outlined in Alg.\ \ref{alg:FW}. Moreover, Frank–Wolfe reduces to iterated linear minimizations for which highly efficient combinatorial algorithms exist, whereas projection-based methods require solving quadratic programs over the Birkhoff polytope. In particular, $\mathrm{argmin}_{P \in \mathcal P_n} \langle \nabla F_S(A_t), P\rangle$ can be computed by finding a maximum weight matching in the bipartite graph  $G$ that has vertices $[n] \sqcup [n]$ and an edge from vertex $i$ to vertex $j$ of weight $\nabla F_S(A_t)(i,j)$; the weight of a matching $P$ in $G$ is $\langle \nabla F_S(A_t), P\rangle$. Similar applications of Frank-Wolfe to continuous optimization over the Birkhoff polytope have been discussed by Tewari et al. \cite{tewari2011} and Jaggi \cite{jaggi2013}, however these works do not consider applications to combinatorial optimization, as we do.

\begin{figure}

\begin{minipage}[t][5cm][t]{.48\textwidth}
\begin{algorithm}[H] 
    \begin{algorithmic}
\Require $f: \mathcal P_n \to \mathbb R$, random $A \in \mathcal D_n$, score  $S$
\For{$t = 1 \cdots T$}

\State $P_t \gets \mathrm{argmax}_{P \in \mathcal P_n} \langle \nabla F_S(A_t), P\rangle$
\State $A_{t+1} \gets (1 -  \lambda_t )A_t +  \lambda_t P_t$
\EndFor

\State \Return $\mathrm{round}_S(A_T)$
\end{algorithmic}
\caption{Static score Frank-Wolfe over $\mathcal D_n$}
\label{alg:FW}
\end{algorithm}
\end{minipage}
\begin{minipage}[t][5.3cm][t]{.48\textwidth}
\begin{algorithm}[H] 
\begin{algorithmic}
    \small 
\Require $f: \mathcal P_n \to \mathbb R$, random $A \in \mathcal D_n$, score  $S$
\For{$t = 1 \cdots T$}

\State $P_t \gets \mathrm{argmax}_{P \in \mathcal P_n} \langle \nabla F_S(A_t), P\rangle$
\State $A_{t+1} \gets (1 -  \lambda_t )A_t +  \lambda_t P_t$
\State $\mathcal P^* \gets \mathcal P^* \cup \mathrm{Birkhoff}(A_{t+1})$ 
\If{\texttt{update\_score}} 
\State $Q \sim \mathrm{Unif}([0,1]^{n \times n})$      
\State $P^* \gets \frac{1}{2n}Q + \mathrm{argmin}_{P \in \mathcal P^*}f(P)$ 
\State $S \gets  P^* $ 
\EndIf
\EndFor
\State \Return $\mathrm{round}_S(A_T)$
\end{algorithmic}
\caption{Dynamic score Frank-Wolfe over $\mathcal D_n$}
\label{alg:DFW}
\end{algorithm}
\end{minipage}
\end{figure}

\textbf{Dynamic score.} One issue with gradient-based optimization of a Birkhoff extension is that the extension is not necessarily convex, and thus, an optimization algorithm may converge to local minima. Even if the function is convex, analyzing the convergence of Frank-Wolfe type algorithms for non-smooth functions is highly nontrivial \cite{asgari2022projection,ravi2019deterministic,asgari2024nonsmooth} and, in general, they may not converge at all, see \cite{nesterov2018complexity} for a counterexample. To alleviate these issues, we propose an optimization scheme in which the score matrix $S$ is frequently changed to help escape local minima. The key to the efficacy of this approach is that the score matrix can be changed without decreasing the quality of the rounded solution. Property \ref{prop:rounding}-2 gives conditions for this change to be made without harming the optimization. In particular, this occurs under the condition that the new score matrix $S'$ is sufficiently close to a permutation $P$ satisfying $f(P) \leq  f(\mathrm{round}_{S}(A))$. We further theoretically validate this approach by showing at any $A \in \mathcal D_n$ there is a score matrix $S$ such that $A$ is not a local minimum of $F_S$ (App.\ \ref{app:escape}). 
\begin{theorem}[Escaping Local Minima] \label{thm:escape}
Let $f: \mathcal{P}_n \to \mathbb{R}$ be any function on permutations, there exists a score matrix $S$ such that $A$ is not a \emph{local} minimum of $F_{S}$.
\end{theorem}

As finding such a score matrix is computationally hard, we use the following procedure in practice. Let $\mathrm{Birkhoff}_S(A)$ be the permutations with non-zero coefficients in the $S$-induced Birkhoff decomposition of $A$ and for each iterate $A_{t}$ let $\mathcal P^* = \cup_{0 \leq t' \leq t} \mathrm{Birkhoff}_S(A_{t'})$. We update the score to $S' = \mathrm{argmin}_{P \in \mathcal P^*}f(P) + \frac{1}{2n}Q$ where $Q$ is a matrix with uniform random entries in $[0,1]$. By Property \ref{prop: changing score}-2 this update satisfies  $f(\mathrm{round}_{S'}(A_t)) \leq f(\mathrm{round}_S(A_t))$. This procedure is outlined in Alg.\ \ref{alg:DFW}, where \texttt{update\_score} is a flag determined externally indicating if convergence has occurred and the score should be updated.

\textbf{Adaptations to constraints and trees. }
We provide further extensions based on our Birkhoff decomposition. First, we show how Birkhoff extension can be adapted to a class of rooted binary trees over a fixed leaf set (App.\ \ref{app:trees}). Next, we show that Birkhoff extension can easily handle a broad class of simple constraint (App.\ \ref{app:constrained}). In particular, we consider constraints that can be incorporated into the corresponding matching problem.

\textbf{Unsupervised learning.} We apply Birkhoff extension to train neural combinatorial optimization solvers (App.\  \ref{app:nn}). As Birkhoff extension is a.e.\ differentiable, it can be used as a loss function in unsupervised learning. Using this loss, neural networks can be trained to output solutions to combinatorial problems.  Our proof-of-concept experiments show that training a neural network using  Birkhoff extension yields a model that predicts good approximate solutions. In particular, the output of this NN can be used to initialize our Birkhoff extension optimization algorithm to provide substantial speedups. Furthermore, this  generalizes to larger problem instances than seen in training.

\section{Experiments}
\label{sec:experiments}
\noindent We carry out  experiments on three different combinatorial optimization problems: the quadratic assignment problem (QAP), the (Euclidean) traveling salesman problem (TSP), and the directed feedback arc set problem (DFASP). For QAP, we seek a bipartite matching that minimizes a quadratic loss. For TSP, the goal is to find the shortest possible tour visiting each city once and returning to the starting point. In DFASP, we aim to find a vertex ordering that minimizes the number of edges directed against the order.  Detailed definitions and linear integer programming formulations for each problem are given in App.\ \ref{exp:prob}. For each problem instance, we apply a variant of Alg.\ \ref{alg:DFW} to optimize and compare its performances with baselines.  In particular, we update the score matrix every 10 epochs (i.e., $\texttt{update\_score} = \text{True}$  in Alg.\ \ref{alg:DFW} if and only if $t \in \{10,20, \dots, T\}$). 
To overcome the $O(n^5)$ time complexity, we truncate the Birkhoff decomposition to only the first $k = 5$ terms. Automatic differentiation is used to compute gradients. Additional experiments and ablations are given in App.\ \ref{app:ablation}.

{\bf Summary of results.} We first summarize our results and then provide specific discussion (full details in App.\ \ref{exp:detail}). Our optimization algorithm performs best at QAP, where it outperforms two different Gurobi implementations and two popular heuristic approaches (which are the default \texttt{scipy} heuristics). We suspect Birkhoff extension performs particularly well here since permutation matrices naturally model matchings in that the $i,j$ matrix entry indicates a match between facility $i$ and location $j$. Other permutation problems are less amenable to this representation; e.g, in TSP the $i,j$ matrix entry indicates that city $i$ occurs $j$th in the tour. Indeed, Gurobi is highly effective for TSP and outperforms our Birkhoff extension-based approach both in terms of accuracy and efficiency. Nevertheless, we show that Birkhoff extension can still be used as a local improvement to further improve solutions provided by the minimum spanning tree (MST) approximation algorithm or by quadratic programming. Gurobi does not perform well for DFASP. In particular, our approach provides superior solutions for large instances for DFASP.

\begin{table}[!htbp]
\centering
\small
\begin{tabular}{c|c|c}
\toprule
\multirow{1}{*}{\textbf{Method}} & \multicolumn{2}{c}{\textbf{QAP: Assignment Cost $\downarrow$}} \\ 
\cmidrule{2-3} & Average Gap from the best known & Run-Time (secs) \\

\midrule
\multirow{1}{*}{Gurobi, Kaufman-Broeckx}
& 7.67\% & 90.3\\
\midrule
\multirow{1}{*}{Fast Approximate QAP \cite{Vogelstein2015FAQ}}
& 15.41\% & 0.01 \\
\multirow{1}{*}{2-opt \cite{2-opt58}}
& 10.38\% & 22.82 \\
\midrule
\multirow{1}{*}{Random $S$ Init. Alg.\ \ref{alg:DFW} } 
& \textbf{6.30\% } & 23.01 \\
\midrule
\end{tabular}

\caption{ Performances of methods for QAP in terms of assignment costs. As QAPLIB contains the best known solutions for each problem instance, we report the average gap to these objective values.}
\label{tab:qap}
\end{table}

\textbf{QAP.} We evaluate on the QAPLIB library \cite{Burkard1997}, which contains $N = 136$ problem instances with sizes between 12 and 256. Some instances have confirmed optimal objective values, while others have the best-known objective values reported. For Gurobi, we use the Kaufman–Broeckx linearization \cite{kaufman1978algorithm} for the Mixed-Integer Linear Programming (MILP), and we set same maximum time budget for Gurobi and our method as $2n$ seconds, and we report the actual run-time on average for all methods. For the same time budget, the canonical ILP formulation \cite{zhang2010solving} fails on 70\% of instances. We also test the same benchmark with two other heuristics: Fast Approximate QAP \cite{Vogelstein2015FAQ} and 2-opt algorithm \cite{2-opt58}. Our method outperforms all benchmarks (see results in Table \ref{tab:qap}).

\textbf{TSP.} We generate instances by uniformly sampling $n$ vertices $v_i \in [0,1]^2$. For each size $n\in\{20,30,40,50,100\}$, we generate $N = 50$ instances. Comparisons are made with an MST-based approximation algorithm \cite{Christofides1976}, a quadratic programming (QP) relaxation, and Gurobi (which is optimal). For Alg.\ \ref{alg:DFW} we test three different score matrix $S$ initializations: uniformly random in $[0,1]^{n^2}$, MST approximation, and the QP relaxation solution.  The learning rate is $\eta=0.01$. The maximum number of optimization steps is $T=10000$, and early stopping triggers after $2000$ steps without improvement.
All optimization processes converge (see optimization curves for different heuristics in App.\ \ref{exp:detail}). 
Results for TSP are presented in Table \ref{tab:tsp}. Although Gurobi is highly effective and returns the optimal solution, our approach can still provide local improvements over QP and MST (see the last two rows in Table \ref{tab:tsp}) by using those solutions as the input score matrix for Alg.\ \ref{alg:DFW}.

\textbf{DFASP.} We generate instances using a directed \emph{Erd\H{o}s-R\'enyi} model with $p\in\{0.1,0.5,0.9\}$. Again, for each problem size $n\in\{20,50,100\}$, we generate $N = 50$ instances. 
Note, Gurobi is not able to find an optimal solution within a reasonable time for DFASP. Hence, to compare the quality of Gurobi vs.\ our approach, we limit the runtime of both algorithms to $\frac{n}{10}$ minutes for equal comparison. 
Consequently, the number of steps $T$ varies for different problem instances. (Additional experiments (App.\ \ref{exp:detail}) show these optimization algorithms offer minimal improvement after the $\frac{n}{10}$ minute time limit.) The learning rate is $\eta = 0.005$.  Results, given in Table \ref{tab:DFASP}, show our method is competitive with Gurobi and shows superior performance for larger instances.

\begin{table}[!h]
\centering
\setlength{\tabcolsep}{2pt}
\resizebox{\textwidth}{!}{
\begin{tabular}{c|c|c|c|c|c}
\toprule
\multirow{2}{*}{\textbf{Method}} 
  & \multicolumn{5}{c|}{\textbf{TSP: Tour Length $\downarrow$ }} \\
\cmidrule{2-6}
\multicolumn{1}{r|}{$n=$\.} 
  & 20    & 30    & 40    & 50    & 100    \\
\midrule
Gurobi 
  & \textbf{3.889} & \textbf{4.531} & \textbf{5.190} & \textbf{5.707} & \textbf{7.748} \\
\midrule
MST   & 4.746 & 5.784 & 6.835 & 7.410 & 10.288 \\
QP    & 4.553 & 5.666 & 6.818 & 7.782 & 11.896 \\ 
MCTS w.\ Rand.\ Simulation 
      & 4.558 (0.5s)   & 5.563 (1.7s)   & 6.706 (4.7s)   & 7.468 (9.8s)   & 11.781 (113.37s) \\
\midrule
Random $S$ Init.\ Alg.\ \ref{alg:DFW} 
      & 4.518 & 5.681 & 7.139 & 8.433  & 15.615 \\
MST $S$ Init.\ Alg.\ \ref{alg:DFW} (Improv.) 
      & \color{blue}4.345 (8.33 \%) & \color{blue}5.290 (8.53 \%) & \color{blue}6.328 (7.42 \%) 
      & \color{blue}6.892 (6.99 \%) & \color{blue}9.836 (4.60 \%) \\
QP $S$ Init.\ Alg.\ \ref{alg:DFW} (Improv.) 
      & 4.405 (3.25 \%) & 5.633 (0.58 \%) & 6.709 (1.60 \%) & 7.670 (1.45 \%) & 11.782 (0.96 \%) \\
\midrule
\end{tabular}}
\\
\caption{Performances of methods for TSP in terms of tour length. Best results are marked as bold; second bests are blue. If score is initialized to an approximate solution, percentages indicate proportional improvement of Alg.\ \ref{alg:DFW} over initialization. See appendix \ref{exp:detail} for optimization}\label{tab:tsp}
\end{table}

\begin{table}[!h]
\centering
\small
\setlength{\tabcolsep}{2.5pt}
\begin{tabular}{c|c|c|c|c|c|c|c|c|c|c|c|c}
\toprule
\multirow{2}{*}{\textbf{Method}} 
  & \multicolumn{12}{c}{\textbf{DFASP: Feedback‐Arc Set Size $\downarrow$}} \\
\cmidrule(lr){2-13}
  & \multicolumn{4}{c|}{$p=0.1$} 
  & \multicolumn{4}{c|}{$p=0.5$} 
  & \multicolumn{4}{c}{$p=0.9$} \\
\cmidrule(lr){2-13}
\multicolumn{1}{r|}{$n=$\ } 
  & 20    & 50     & 100     & 250    
  & 20    & 50     & 100      & 250     
  & 20     & 50      & 100      & 250       \\
\midrule
Gurobi 
  & \textbf{3.8}   & \textbf{44.8}   & \textbf{269.1}  & \textbf{875.6}  
  & \textbf{67.7}  & \textbf{475.4}  & 2229.9    & 6831.7   
  & \textbf{156.7} & \textbf{1028.0} & 4362.7    & 13652.4   \\
\midrule
Random $S$ Init.\ Alg.\ \ref{alg:DFW} 
  & 4.7    & 53.0    & 289.0    & 991.3    
  & 65.0   & 496.5   & \textbf{2123.4} & \textbf{6459.8} 
  & 157.5  & 1039.0  & \textbf{4253.7} & \textbf{12198.7} \\
\bottomrule
\end{tabular}
\caption{Performances of DFASP algorithms in terms of feedback‐arc set cardinality.}
\label{tab:DFASP}
\end{table}


\section{Concluding remarks}
\label{sec:conc}
We present Birkhoff extension, a continuous a.e.\ differentiable extension of permutation functions to doubly stochastic matrices, which has rounding guarantees. Combining this extension with a gradient-based optimization algorithm, we develop an iterative optimization framework for permutation functions. We present  experiments to validate our approach for combinatorial optimization problems.

We propose a neural optimizer based on our Birkhoff extension, however, this direction requires further exploration and far more extensive experiments. We leave this as a future direction to investigate. 
We also note that currently we use a simple strategy to update the score matrix. It would be interesting to explore more effective update strategies. 
Finally, computing Birkhoff extension is expensive ($O(n^5)$ time complexity), although in practice, the number of permutations is usually far fewer than $n^2$. It will be interesting to investigate how to improve the time complexity, or how to obtain an updated the Birkhoff decomposition efficiently as the input matrix changes, given that one needs to compute this decomposition many times within our optimization framework. One potential for speedup is to develop matching algorithms that are more amenable to GPU computation. We initiate discussion by introducing a GPU friendly matching algorithm in App.\ \ref{app:GPU}. 
\section*{Acknowledgments}
This work is partially supported by NSF (National Science Foundation) by grant CCF-2112665. 

\bibliographystyle{plain} 
\bibliography{refs}

\appendix
\section*{Table of Contents}
\setcounter{tocdepth}{2}
\tableofcontents
\vfill
\pagebreak

\section{Deferred Proofs} \label{app:proofs}

\begin{proof}[proof of Thm.\ \ref{thm:cont Birkhoff}]
We show continuity (i) by induction. Note that Eq.\ \ref{eq:coeff} makes reference to the order $\{P_\ell\}_{\ell =1}^{n!}$, which we now induct on.  For the base case, we have that $\alpha_1$ is a Lipschitz continuous function of $A$ as the sum in Eq.\ \ref{eq:coeff} disappears. Now, assume $\alpha_m$ for $m < \ell$ is a Lipschitz continuous function of $A$. Then, by Eq.\ \ref{eq:coeff}, $\alpha_\ell$ is a Lipschitz continuous function of $A$, as $\min$ is a Lipschitz continuous function.

Now for each $\ell \in [n!]$, set $B^\ell = A - \sum_{m=1}^\ell \alpha_m P_m$. 
Clearly, we have $B^\ell = B^{\ell - 1} - \alpha_\ell P_\ell$, and $\alpha_\ell = \min_{i,j} \{B^{\ell-1}(i,j) \mid P_\ell (i,j) = 1\}$. 
Furthermore let $Z^\ell$ denote the set of indices $(i,j)$ of non-zero entries in $B^\ell$; that is, $Z^\ell = \{(i,j) \mid B^\ell(i,j) > 0\}$. 
First, note that since $A$ and every $P_m$ are doubly stochastic, $B^\ell$ must be proportional to some doubly stochastic matrix for any $\ell$. Indeed, for any $\ell$ the matrix $B^\ell$ has rows and columns that sum to $1 - \sum_{m = 1}^{\ell}\alpha_m$. 
Next, we use induction to show that (a) $B^\ell$ has non-negative entries, (b) $\alpha_\ell = 0$ or $\alpha_\ell > 0$, and (c) if $\alpha_\ell > 0$, then $Z^\ell$ is a strict subset of $Z^{\ell-1}$; i.e. $Z^\ell \subset Z^{\ell-1}$. Note that (c) implies that the number of non-zero entries in $B^\ell$ is strictly smaller than that in $B^{\ell - 1}$ whenever $\alpha_\ell > 0$. 
We then use these properties to prove statements (ii) -- (iv) in the theorem. 

First, it is easy to see that properties (a) - (c) hold for the base case $\ell = 1$. 
Now consider $\ell > 1$, and assume they hold for any $r < \ell$. 
Since all entries in $B^{\ell-1} = A - \sum_{m=1}^{\ell-1} \alpha_m P_m$ are non-negative, namely $B^{\ell-1}(i,j) = A(i,j) - \sum_{m = 1}^{\ell - 1}\alpha_m P_m(i,j) \ge 0$ for any $i, j \in [n]$, we have $\alpha_\ell \ge 0$, proving property (b). 

If $\alpha_\ell = 0$, then properties (a) and (c) trivially hold so assume $\alpha_\ell > 0$. Let $i^*, j^*$ be indices that give rise to $\alpha_\ell$. That is, $\alpha_\ell = B^{\ell-1}(i^*, j^*)$ and $P_\ell(i^*, j^*)=1$.  By definition of $\alpha_\ell$, for any other $i, j$ such that $P_\ell(i,j) = 1$, we have that $\alpha_\ell \le B^{\ell-1}(i,j)$. Since $P_\ell$ is a binary matrix, this means that for all $i, j \in [n]$, $\alpha_\ell P_\ell(i,j) \le B^{\ell-1}(i,j)$. Hence all entries in $B^\ell = B^{\ell - 1} - \alpha_\ell P_\ell$ are non-negative, proving property (a). 

Furthermore, note that by construction, $B^\ell(i^*, j^*) = B^{\ell-1}(i^*, j^*) - \alpha_\ell P_\ell(i^*, j^*)=0$, while $B^{\ell-1}(i^*, j^*) = \alpha_\ell > 0$. As each entry in $B^\ell$ is less than the corresponding entry in $B^{\ell - 1}$, but are still non-negative, we can conclude $Z^\ell \subset Z^{\ell-1}$. In particular, $(i^*, j^*) \in Z^{\ell-1}$ but not in $Z^\ell$. This proves property (c). 

Hence by induction, properties (a), (b) and (c) hold for all $\ell \in [n!]$. 
We are now ready to prove statements (ii) - (iv).

To prove (iii) we aim to show that $B^{n!} = 0$. We proceed by contradiction so suppose $B^{n!} \neq 0$. Then $B^{n!}$ is proportional to a doubly stochastic matrix and must have at least one matching, say $P_{\ell^*}$, by Hall's Marriage theorem. However, as all entries in $B^\ell$ are non-decreasing as $\ell$ increases, the set of non-zero entries in $B^{\ell^*-1}$ is a super set of those in $B^{n!}$ and, thus, super set of those in $P_{\ell^*}$. Therefore we must have $\alpha_{\ell^*} > 0$. Furthermore, let $(i^*, j^*)$ be the pair of indices giving rise to $\alpha_{\ell^*}$, i.e., $ \alpha_{\ell^*} = B^{\ell^*-1}(i^*,j^*)$ and $P_\ell(i^*,j^*) = 1$. Then $B^{\ell^*}(i^*, j^*) = B^{\ell^*-1}(i^*, j^*)- \alpha_{\ell^*} P_{\ell^*}(i^*,j^*) = 0.$
As all entries can only decrease, $B^{n!}(i^*, j^*) = 0$. However, since $P_{\ell^*}(i^*, j^*) = 1$, this means that $P_{\ell^*}$ cannot be a matching for $B^{n!}$, which is a contradiction. Hence our assumption that $B^{n!}$ is non-zero cannot be true, and we must have that $B^{n!} = 0$, meaning $A = \sum_{\ell=1}^{n!} \alpha_\ell P_\ell$. This proves statement (iii). 

To prove (ii), note that by (b), all $\alpha_\ell \ge 0$. Furthermore, as we mentioned earlier, for any $\ell$ the matrix $B^\ell$ has rows and columns summing to $1 - \sum_{m = 1}^{\ell - 1}\alpha_m$. Since $B^{n!}$ is the zero-matrix, it follows that $1 - \sum_{m = 1}^{n!}\alpha_m = 0$, proving (ii). 

Finally, we bound the number of non-zero coefficients used in the decomposition and prove statement (iv). Consider an $\ell^*$  for which there are $n^2 -n+1$ non-zero coefficients $\alpha_m$ with $m \le \ell^*$. If no such $\ell^*$ exists we are done. If such $\ell^*$ exists, then $B^{\ell^*} = A- \sum_{m = 1}^{\ell^*}\alpha_m P_m$ is either the zero matrix or is proportional to some doubly stochastic matrix. 
If $B^{\ell^*}$ is the zero matrix, then we are done. So assume $B^{\ell^*}$ proportional to some doubly stochastic matrix. 
Note that any doubly stochastic matrix necessarily has at least $n$ non-zero entries (as each row-sum needs to be 1). Hence $B^{\ell^*}$ has at least $n$ non-zero entries. However, as the set of indices for non-zero entries $Z^m$ strictly decreases each time $\alpha_m > 0$ (property (c) proved above), and there are $n^2-n+1$ such $\alpha_m$ with $m \leq \ell^*$, this means that the number of non-zero entries in the original matrix $A$ is at least $n^2 - n + 1 + n = n^2+1$, which is not possible (as there are only $n^2$ entries in a $n\times n$ matrix). Hence $B^{\ell^*}$ must be a zero matrix, and there cannot be more than $n^2 - n+1$  non-zero coefficients in $\{\alpha_\ell\}_{\ell=1}^{n!}$. This completes the proof of statement (iv). 
\end{proof}

\begin{proof}[proof of Thm.\ \ref{thm:ae diff}]

 $\mathcal D_n$ lies in an $(n-1)\times (n-1)$ dimensional affine subspace of $\mathbb R^{n^2}$ defined by all $n\times n$ matrices with rows and columns summing to 1. We can parameterize this subspace with the linear map
\begin{align}
    \phi: \mathbb R^{(n-1)^2} &\to \mathbb R^{n^2}\\
     \begin{pmatrix}
         a_{1,1} & \dots & a_{1,n-1}\\
         \vdots & \ddots& \vdots \\
         a_{n-1,1} & \dots & a_{n-1,n-1}
     \end{pmatrix}
     &\mapsto 
     \begin{pmatrix}
         a_{1,1} & \dots & a_{1,n-1} & 1 - \sum_{j =1}^{n-1} a_{1,j} \\
         \vdots & \ddots& \vdots & \vdots\\
         a_{n-1,1} & \dots & a_{n-1,n-1} & \vdots\\
                1 - \sum_{i =1}^{n-1} a_{i,1} & \dots & \dots & 1- \sum_{i = 1}^{n-1}\left( 1 - \sum_{j =1}^{n-1} a_{i,j}\right)
     \end{pmatrix}
\end{align}

 Note that $X = \phi^{-1}(\mathcal D_n)$ is the subset of $R^{(n-1) \times (n-1)}$ containing matrices  with rows and column sums in $[0,1]$. This is a closed set whose boundary has measure zero since it has dimension less than $(n-1) \times (n-1)$.  The Lipschitz  continuity of $\phi$ and $\alpha_\ell$ gives that $\alpha_\ell \circ \phi$ is Lipschitz  continuous. 
 Applying Rademacher's theorem \cite{Rudin} to the interior $\mathrm{int}(X)$ of $X$ yields that $\alpha_\ell \circ \phi$ is an a.e.\ differentiable function from $\mathrm{int}(X) \to \mathbb R$. Since $X \setminus \mathrm{int}(X)$ has measure zero, we have that $\alpha_\ell \circ \phi$ is an a.e.\ differentiable function on $X$. Let $N \subset X$ be the set of points for which $\alpha_\ell \circ \phi$ is not differentiable. Then $\phi(N)$ is the set of points such that $\alpha_\ell$ is not differentiable. Furthermore, $\phi(N)$ has Lebesgue measure zero in $\mathcal D_n$ since $N$ has measure zero and $\phi$ is a surjective linear map to $\mathcal D_n$. 
\end{proof}

\begin{proof}[proof of Thm.\ \ref{thm:random score}]
    Consider two distinct permutations $P, P' \in \mathcal P_n$. Let $I$ and $I'$ contain the pairs $(i,j)$ of indices such that $P(i,j)  =1$ and $P'(i,j) = 1$ respectively. The score of these permutations $\langle P , S \rangle = \sum_{(i,j) \in I} S(i,j)$ and $\langle P' , S \rangle = \sum_{(i,j) \in I'} S(i,j)$, are equal if and only if \begin{equation}
        \sum_{(i,j) \in I \setminus I'} S(i,j) = \sum_{(i,j) \in I' \setminus I} S(i,j).
    \end{equation}
    The left-hand side and the right-hand side of this equation are independently distributed absolutely continuous random variables, so they are equal with probability zero. Since any pair of permutations have different scores almost surely, all permutations have different scores almost surely by the union bound. 
\end{proof}
\begin{proof}[proof of Property \ref{prop:continuity}]
   Lipschitz continuity and a.e.\  differentiability of $F$ follow from the Lipschitz 
 continuity and a.e.\  differentiability of $\{\alpha_k\}_{k = 1}^{M}$ (theorems \ref{thm:cont Birkhoff} and \ref{thm:ae diff}).

 Now we show that the gradient $\nabla_AF(A)$ is equivalent to 
\begin{equation}
       \nabla_AF(A) = 
        \sum_{k = 1}^M (\nabla_A \alpha_k) f(P_k).
   \end{equation}
 We start with a dimensionality argument showing there are never more than $n^2 -2n +2$ terms in a Birkhoff decomposition. First we show the set $\{P_k\}_{k = 1}^M$ of permutations with positive coefficients is linearly independent.

 Suppose some $P_{\ell}\in \{P_1,\dots,P_M\}$  can be written as a linear combination  $P_\ell = \sum_{k \in [M]\setminus \{\ell \}} c_{k}P_{k}$ for $c_{k} \in \mathbb R$.
We first show all $c_k$ with $k<\ell$ must be zero. The proof is by induction. We have  $c_1 = 0$ as there must be some $i,j \in [n]$ such that $P_1(i,j) = 1$ but $P_k(i,j) = 0$ for $k > 1$. Now consider some $k < \ell$ and suppose $c_1=\ldots=c_{k-1} = 0$. Then $c_{k} = 0$ as there is a $i,j \in [n]$ such that $P_{k}(i,j) = 1$ but for $k' > k$, $P_{k'}(i,j) = 0$.  We can then conclude $c_1=\ldots=c_{\ell-1} = 0$. Finally, since there is $i,j \in [n]$ such that $P_{\ell}(i,j) = 1$ but $P_k(i,j) = 0$ for $k > \ell$ we can conclude $P_\ell = \sum_{k \in [M]\setminus \{\ell\}} c_{k}P_{k} = \sum_{\substack{k =\ell+1}}^{M} c_{k}P_{k}$ cannot hold. We have, thus, shown that the set of permutations with positive coefficients is linearly independent.
 
Since this set of permutations is linearly independent, the maximum number of permutations with positive coefficient is one more than the dimension of $\mathcal D_n$ which is $n^2 -2n +2$. Suppose that $A$ has a full Birkhoff decomposition, that is, there are $n^2 -2n +2$ positive terms in its Birkhoff decomposition. Then, by continuity of $\{\alpha_\ell\}_{\ell = 1}^{n!}$, there is an open ball in $\mathcal D_n$ containing $A$ such that all decompositions of points in the ball have the same positive coefficients. Thus, the gradient of the zero coefficients at $A$ is zero. 

 It now remains to show that a.e.\ $A \in \mathcal D_n$ has  a full Birkhoff decomposition. For each subset $E \subset \mathcal P_n$ with $|E| = n^2 -2n +1$, note that the convex hull $\mathrm{Conv}(E)$ of $E$ has dimension $n^2 -2n$ and, therefore, has measure zero in the space $\mathcal D_n$, which has dimension $n^2 - 2n + 1.$ Consider the union of such convex hulls
 \begin{equation}\mathcal E = \bigcup_{\substack{E \subset \mathcal P_n\\|E| = n^2 -2n +1}}\mathrm{Conv}(E).\end{equation} The space $\mathcal E$ also has measure zero. Suppose $A$ does not have full Birkhoff decomposition. Then $A \in \mathcal E$ as it can be represented as the convex combination of at most $n^2 - 2n + 1$ permutation matrices. We can then conclude that a.e.\ $A \in \mathcal D_n$ has a full Birkhoff decomposition.

\end{proof}

\begin{proof}[proof of Property \ref{prop:minima}]
(1.) Note that for $P \in \mathcal P_n$ we have $f(P) = F(P)$ since all Birkhoff coefficients of $P$ have only one term. Thus, $\min_{P \in \mathcal P_n}f(P) \geq \min_{A \in \mathcal D_n}F(A)$. Now suppose $\min_{P \in \mathcal P_n}f(P) > \min_{A \in \mathcal D_n}F(A)$ and let $A \in \mathrm{argmin}_{A \in \mathcal D_n}F(A)$. Note $F(A)$ is a convex combination $F(A) = \sum_{k = 1}^M \alpha_k f(P_k)$ and  $\sum_{k = 1}^M \alpha_k = 1$. We can then conclude that since $\min_{P \in \mathcal P_n}f(P) > \sum_{k = 1}^M \alpha_k f(P_k)$ there must be some $P_k$ such that $f(P_k) \leq \min_{P \in \mathcal P_n}f(P)$, a contradiction. 

(2.) Suppose $A$ minimizes $F(A)$ over $\mathcal D_n$. Then $F(A) = \min_{P \in \mathcal P_n}f(P)$, which occurs only if for each $P_k$ in the convex combination $F(A) = \sum_{k = 1}^M \alpha_k f(P_k)$ we have $f(P_k) = \min_{P \in \mathcal P_n}f(P)$. Since $A = \sum_{k = 1}^M \alpha_k P_k$ and $P_k \in \mathrm{argmin}_{P \in \mathcal P_n}f(P)$ the claim $\mathrm{argmin}_{A \in \mathcal D_n}F(A) \subseteq \mathrm{Conv}( \mathrm{argmin}_{P \in \mathcal P_n}f(P))$ holds. 
\end{proof}

\begin{proof}[proof of Property \ref{prop:rounding}]
   (1) If $f(\mathrm{round}_S(A)) > F(A)$ then $\mathrm{argmin}_{k = 1}^{M}(f(P_k)) > F(A)$ so for each $P_k$ in the decomposition $F(A) = \sum_{k = 1}^M \alpha_k f(P_k)$ we have $f(P_k) > F(A)$. However, since $\sum_{k = 1}^M \alpha_k = 1$ this implies $\sum_{k = 1}^M \alpha_k f(P_k) > F(A)$, a contradiction. 

   (2) Let the decomposition of $F$ be $F(A) = \sum_{k = 1}^M \alpha_k f(P_k).$ Recall by Thm.\ \ref{thm:alg} that Alg.\ \ref{alg:cont birkhoff} produces this decomposition. Through Alg.\ \ref{alg:cont birkhoff} we have $P_1 = \mathrm{argmax}_{P \in \mathcal P_n}\langle P , S \rangle$. Next we show $\mathrm{argmax}_{P \in \mathcal P_n}\langle P , S \rangle = P^*$. Since each entry of $S$ is within $1/2n$ of $P^*$, we have  
   \begin{align}
       \langle S , P^*\rangle &> n\left(1 - \frac{1}{2n}\right)\\
       &= n - \frac{1}{2}. 
   \end{align}
   Also,  any $ P'\neq P^*$ must differ from $P^*$ by at least one entry so $\langle P', P^*\rangle \leq (n-1)$ and the inner product with $S$ is 
   \begin{align}
       \langle S, P' \rangle &\leq  (n - 1) + n\left( \frac{1}{2n}\right) \\
       &= n - \frac{1}{2}
   \end{align} where the second term accounts for the entry-wise differences between $S$ and $P^*$. We have then shown
    $P_1 = \mathrm{argmax}_{P \in \mathcal P_n}\langle P , S \rangle = P^*$ which implies 
    $f(\mathrm{round}_{S}(A)) \leq f(P^*)$. 
\end{proof}

\begin{proof}[proof of Thm.\ \ref{thm:scoreBDcomputation}]
Our algorithm for constructing the continuous Birkhoff decomposition is Alg.\ \ref{alg:cont birkhoff}, which returns only the non-zero Birkhoff coefficients. This algorithm is the same as Alg. \ref{alg:birkhoff} except at each step, with $B$ being the matrix to be decomposed at this step, we subtract off the permutation of maximum score that is a matching of $B$, as opposed to an arbitrary matching of $B$. Recall that $B$ is proportional to a doubly stochastic matrix, so it either has a matching or is the zero matrix (in which case the algorithm terminates). To compute $\mathrm{argmax}_{P \in \mathcal P(B)} \langle P,S \rangle$ we first construct a bipartite graph $G$ that has an edge from vertex $i$ to vertex $j$ with weight $S(i,j)$ if and only if $B(i,j) > 0$. 
It is easy to see that (i) matchings of $G$ correspond to the matchings of the scaled doubly stochastic matrix $B$; and (ii) for any matching $P$ in this graph, its weight is exactly $\langle P,S \rangle$ which is the score of permutation $P$. Hence we can compute $\mathrm{argmax}_{P \in \mathcal P(B)} \langle P,S \rangle$ simply by computing the maximum-weight matching of this bipartite graph $G$. 
This computation takes $O(n^3)$ time using the Hungarian algorithm, and since there are at most $O(n^2)$ matchings to compute, the total time complexity is $O(n^5).$

We show the correctness of Alg.\ \ref{alg:cont birkhoff}. Recall our algorithm returns a collection of permutations $P_1, \ldots, P_M$. First, let $\Omega = \{\hatP_{\ell} \}_{\ell=1}^{n!}$ denote the total ordering of all permutations induced by the score matrix $S$. 
Now let $A = \sum_{\ell=1}^{n!} \hatalpha_\ell \hatP_\ell$ denote the Birkhoff decomposition of $A$ w.r.t.\ the total order $\Omega$ as defined in Def.\ \ref{def:cont birkhoff}. Note the slight change in notation so that $\hat\alpha_{\ell}$ and $\hatP_{\ell}$ represent the permutations and coefficients in Def.\ \ref{def:cont birkhoff} and $\alpha_k$ and $P_k$ represent the permutations and coefficients returned by Alg.\ \ref{alg:cont birkhoff}.
Let $i_1 < i_2 < \cdots < i_R$ denote the set of indices whose corresponding coefficients $\alpha_{i_k}$ are positive. That is, ignoring all zero coefficients in the decomposition, we have $A = \sum_{k=1}^R \hatalpha_{i_k} \hatP_{i_k}$ where each $\hatalpha_{i_k} > 0$. 
Our goal is for each $k\in [M]$ to show (cond-A): that $\alpha_k = \hatalpha_{i_k}$ and $P_k = \hatP_{i_k}$. We do so via induction on the index $k\in [M]$. 

We begin by showing the property that $\hatP_\ell$ is a matching of $B^{\ell-1} = A - \sum_{m=1}^{\ell-1} \hatalpha_m\hatP_{m}$ if and only if $\hatalpha_\ell > 0.$ Call this property $(*)$. This property holds since $\hatalpha_\ell > 0$ if and only if every element in the minimum defining $\hatalpha_\ell$ in Def.\ \ref{def:cont birkhoff} is non-zero, which means for each $i,j \in [n]$ with  $\hatP_\ell(i,j) = 1$ we have $B^{\ell - 1}(i,j) > 0$, i.e., $\hatP_\ell$ is a matching of $B^{\ell - 1}$.

The first permutation $P_1$ returned by Alg.\ \ref{alg:cont birkhoff} is $\mathrm{argmax}_{P \in \mathcal P(A)} \langle P,S \rangle$ which is the matching of $A$ with largest score. That is, $P_1$ is the matching of $A$ with smallest index in the total order $\Omega$. Furthermore, by $(*)$, the permutation $\hatP_{i_1}$ must be a matching of $B^{i_1 -1 } = A$ since $\hatalpha_{i_1} > 0$. For $j < i_1$ we have $\hatalpha_{j} = 0$, and again by $(*)$, each $P_j$ is not a matching of $B^{j-1} = A$. We have shown $\hatP_{i_1}$ is the matching of $A$ with smallest index in $\Omega$ and, therefore, (cond-A) holds for the base case $k=1$.

Now assume (cond-A) holds for all $m < k$; we aim to show that it holds for $k$. Let $i_1 < i_2 < \cdots < i_{k-1}$ be the indices for the previous $k-1$ non-zero coefficients. 
In the $k$th iteration of Alg.\ \ref{alg:cont birkhoff}, $B = A - \sum_{m=1}^{k-1} \alpha_m P_m = B^{i_{k-1}}$, and $P_k = \mathrm{argmax}_{P \in \mathcal P(B)} \langle P,S \rangle$, which is the first permutation matrix in the total order $\Omega$ that is a matching for $B$. Let $j$ be the index of $P_k$ in the total order $\Omega$, that is, $P_k = \hatP_{j}$. Notice that $j \notin \{i_1,\dots,i_{k-1}\}$ as this would contradict $P_k$ being a matching of $B$.
We claim that $j > i_{k-1}$. 
If not, then by $(*)$, $\hatalpha_{j} > 0$, a contradiction to our inductive hypothesis that $i_1, \ldots, i_{k-1}$ are the first $k-1$ indices whose coefficients are non-zero in the Birkhoff decomposition of $A$. 
Hence, $P_j$ is the first permutation in the list $\hatP_{i_{k-1} + 1}, \ldots, \hatP_{n!}$ that is a matching of $B$. 

On the other hand,  since $i_k$ is the first index in $i_{k-1} + 1, i_{k-1} + 2,\dots, n!$ such that $\hatalpha_{i_k}>0$, by $(*)$, the index $i_k$ is the first in this list such that the corresponding permutation is a matching of $B^{i_{k-1} -1} = B$. We can then conclude $i_k = j$ and $\hatP_{i_k} = P_k$. Furthermore, $\hat\alpha_{i_k} = \alpha_k$ since $B = B^{i_{k-1}-1} = B^{i_{k}-1}$ so the definition of $\alpha_k$ in Alg.\ \ref{alg:cont birkhoff}, $\alpha_k = \min_{ij}\{B(i,j) \mid P_k(i,j) = 1\}$ is equivalent to the definition of $\hatalpha_{i_k}$ in Eq. \ref{eq:coeff}. This finishes the proof of the inductive step.  
Combining the base case with the inductive step, we have that (cond-A) holds for all $k \in [M]$, hence the set $\{(\alpha_k, P_k)\}_{k=1}^M$ returned by Alg.\ \ref{alg:cont birkhoff} exactly corresponds to those terms in the Birkhoff decomposition (as computed by Def.\ \ref{def:cont birkhoff}) with non-zero coefficients.
\end{proof}

\section{Unsupervised Neural optimizer}
\label{neural}
\label{app:nn}


Since Birkhoff extensions are a.e.\ differentiable, they can be useful for training neural networks for unsupervised neural combinatorial optimization.
In particular, similar to \cite{karalias2022neural}, we can train a neural network $N_\theta$ with parameters $\theta$ that maps an instance $I$ of a problem to a doubly stochastic matrix $A_I$, which we aim to train for the optimization of the combinatorial objective $f$. 
See the illustration in Figure \ref{fig:pipeline}. 

For example, for TSP in the Euclidean space $\mathbb{R}^d$ the instance is a set of cities, represented by a vector $X_I \in \mathbb{R}^{nd}$ representing $n$ points $\{x_1, \ldots, x_n\}$ in $\mathbb{R}^d$. 
The output of $N_\theta$ is a $n\times n$ doubly stochastic matrix $A_I = N_\theta(X_I)$, and the neural network is trained in an unsupervised manner to optimize $F(A_I)$. Note that once trained, when a new instance $I'$ is given with input $X_{I'}$, we can simply return $\mathrm{round}_S( N_\theta(X_{I'}))$ as the TSP tour. Essentially, $N_\theta$ can be viewed as an neural optimizer for the given optimization problem over the extended space $\mathcal{D}_n$. Once a solution in $\mathcal{D}_n$ is identified, it can be rounded to a permutation without lowering the quality of the solution (Property \ref{prop:rounding}). 

Having a differentiable Birkhoff extension allows us to train such a neural network model in an unsupervised manner. 
In particular, first, suppose we have a score matrix $S$ -- this score matrix can be chosen simply as a random stochastic matrix; or it can also be a canonical choice depending on the input problem instance. For example, in the case of TSP, we can choose $S$ to be a perturbation of the permutation derived from the MST. 
With this choice of $S$, let $A = N_\theta(X_I)$ be the output of the neural network. We have that (computed by Alg. \ref{alg:cont birkhoff}) 
\begin{equation}
F_S(A) = \sum_{k=1}^M \alpha_k(A) f(P_k(A)).
\end{equation}
Here, note that both $\alpha_k$ and $P_k$ depend on $A$, and $A$ itself depends on the parameters $\theta$ of the neural network $N_\theta$. 
We simply minimize $F_S(A)$ w.r.t. the parameters $\theta$ via backpropagation. More precisely, computing $\frac{\partial F_S(A)}{\partial \theta}$ boils down to computing $\frac{\partial \alpha_k}{\partial \theta} = \frac{\partial \alpha_k}{\partial A} \cdot \frac{\partial A}{\partial \theta}$ for each positive Birkhoff coefficient $\alpha_k$. 

The above description is for training $N_\theta$ only for a single instance. Usually one wishes to train $N_\theta$ over a family of instances $\mathcal I$, so that once trained, it can be used to produce solutions to new instances. In particular, during training, the loss is $\sum_{I \in \mathcal I} \frac{1}{|\mathcal I|}F_{S}(N_\theta(I))$. Once trained, given a new instance $I$, we can simply compute $A_I = N_\theta(X_I)$ and return the permutation $\mathrm{round}_{S} (A_I)$ as a candidate solution. In practice, we found that for a test instance it makes sense to optimize $N_\theta$ for a few more iterations at testing (as a fine-tuning) to further improve the quality $A_I$. 

We also note that for the case where we have a  score matrix that depends on the problem instance (e.g, using MST to induce a score matrix for the TSP problem), it is beneficial to also take this score matrix $S_I$ as input to the neural network $N_\theta$ to better inform the output matrix $A_I = N_\theta(X_I, S_I)$. This input is optional. 

Finally, in the simplest form, $N_\theta: \mathbb{R}^{nd} \to \mathbb{R}^{n\times n}$ maps an $n$-vector $X$ with each entry from $R^d$ to a $n\times n$ matrix $A$. In particular, again using TSP as an example, here we represent a set of points as a vector $X \in \mathbb{R}^{nd}$, which assumes an ordering of these points. The output matrix assumes the same ordering of input points. In other words, this map $N_\theta$ needs to be \emph{permutation equivariant}, namely, if we permute the order of input points in $X$, then the output should permute in the same way. Mathematically, this means that $N_\theta$ satisfies $N_\theta(P X) = P N_\theta(X) P^T$ for any permutation matrix $P$ over $n$ elements. Such a permutation equivariant neural network can be implemented using models such as the set2graph neural network of \cite{Serviansky20} and the equivariant-graph network of \cite{maron2019invariant}.  

\begin{figure}[h]
    \centering
    \includegraphics[width = 14cm]{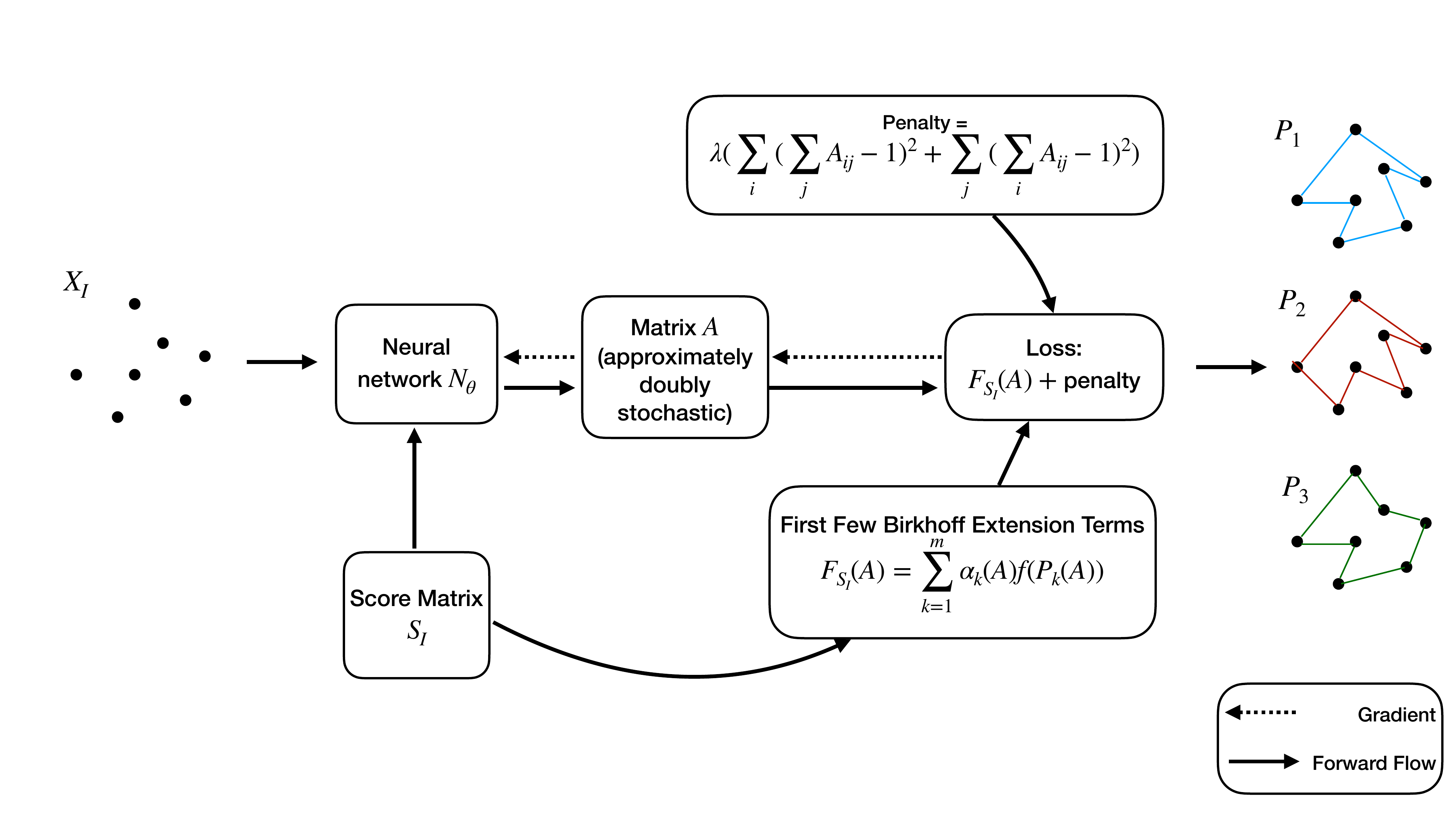}
    \caption{The pipeline of training a neural network $N_\theta$ for a single instance. For a given problem instance $I$, we have its representation $X_I$ and a score matrix $S_I$ as input to the neural network $N_\theta$. The output of the neural network is a doubly stochastic matrix $A = N_\theta(X_I, S_I)$. Birkhoff extensions are used to compute the loss $F_S(A) = \sum_{k=1}^M \alpha_k(A) f(P_k(A))$ and we minimize it via backpropagation. In the figure above, for example, $M = 3$ and rounding produces the permutation $P_2 = \mathrm{round}_{S_I}(A_I)$, highlighted in red.}
    \label{fig:pipeline}
\end{figure}

We  train a neural network $N_{\theta}$ for solving TSP using a loss comprised of the Birkhoff extension $F$ and a penalty term $\lambda \left( \sum_{i} \left( \sum_{j} A_{ij} - 1 \right)^2 + \sum_{j} \left( \sum_{i} A_{ij} - 1 \right)^2 \right)$ that ensures double stochasticity, where $\lambda$ is the weight. We use the Adam optimizer \cite{Adams2011} to train the model. We generate a training dataset with $N=6000$ instances and with a mixture of instance sizes, \(n=20\), \(30\), and \(40\). The input to $N_{\theta}$, for each instance, is the vector $X_I$ and the MST-derived score matrix $S_t$. We trained $N_{\theta}$ for $T=100$ epochs, and selected a learning rate  of $\eta=0.001$ using a hyperparameter search of the set $\{ 0.01, 0.05, 0.001, 0.0005\}$. 
At testing we optimize the trained neural network $N_{\theta}$ for a few more iterations as a fine-tuning. We compare the performance of this fine-tuned model (labeled \emph{MST $S$ Init.\ NN w.\ Alg.\ \ref{alg:cont birkhoff}}) with  the corresponding untrained model (labeled \emph{MST $S$ Init.\ Alg.\ 4}) in Table 4. We show that in most of the cases, the trained neural network model can achieve similar solution quality to the untrained model with much less runtime. This result holds even for problem instances that are larger than the instances seen in training

\begin{sidewaystable*}
\begin{center}
\centering
\begin{tabular}{c|cc|cc|cc|cc|cc|}
\toprule
\multirow{1}{*}{\textbf{Method}} 
& \multicolumn{10}{c|}{\textbf{TSP: Tour Length $\downarrow$ }} \\
\cmidrule{2-11}
& \multicolumn{2}{c|}{\textbf{n = 20}} & \multicolumn{2}{c|}{\textbf{n = 30}} & \multicolumn{2}{c|}{\textbf{n = 40}} & \multicolumn{2}{c|}{\textbf{n = 50}} & \multicolumn{2}{c|}{\textbf{n = 100}}\\
& Obj. & Time & Obj. & Time & Obj. & Time & Obj. & Time & Obj. & Time \\

\midrule
MST & 4.746 & $<$ 1s & 5.784 & $<$ 1s & 6.835 & $<$ 1s & 7.410 & $<$ 1s & 10.288 & $<$ 1s \\
\midrule

Random $S$ Init. Alg. 4 & 4.518 & 192s & 5.681 & 259s & 7.139 & 612s & 8.433 & 671s & 13.560 & 967s \\
MST $S$ Init. Alg. 4 (Improv.)  & 4.345 & 156s & 5.290 & 214s & 6.328 & 571s & 6.892 & 640s & 9.836 & 1318s \\
\midrule
MST $S$ Init. Alg. 4 (100 steps) & 4.541 & 2.71s & 5.512 & 5.30s & 6.529 & 8.29s & 7.333 & 9.72s & 10.239 & 24.56s \\
MST $S$ Init. NN w. Alg. 2 (Fine-tuned) & 4.180 & 1.75s & 5.206 & 2.15s & 6.219 & 2.59s & 7.170 & 3.02s & 10.220 & 5.42s \\
\midrule

\end{tabular} \\
\caption{Performances of trained neural network and pure optimization both with Birkhoff extension on TSP. }
\end{center}
\end{sidewaystable*}
\clearpage

\section{Extensions for Optimization over Trees} \label{app:trees}
In this section we consider how a parallel framework can be applied to optimization over rooted binary trees with $n$ labeled leaves. Problems such as the Steiner minimum tree problem and optimization-based hierarchical clustering reduce to optimization over this space. For instance, given a fixed rooted binary tree with leaves the set of terminals, the optimal Steiner tree with this topology can be computed efficiently. Therefore, computing the Steiner minimum tree reduces to optimizing the topology.

We begin by representing the space of rooted binary trees as matrices. 
\begin{definition} \label{def:trees}
   Let $\mathcal T_n$ be the space of rooted binary trees with leaf set $[n]$, which we represent by a directed graph with edges directed away from the root.  Let $\mathcal W_{2n-2} \subset \mathcal D_{2n-2}$ be the space of  $(2n-2) \times (2n - 2) $ doubly stochastic matrices $W$ that satisfy $W(i,j) = 0$ if either
   \begin{enumerate}
       \item $i > n-1$ and $i \leq j$ , or
       \item $i \leq n-1$ and $i +(n-1) \leq j $.
   \end{enumerate}  Let $\mathcal B_{2n-2} \subset \mathcal W_{2n-2}$ be matrices in $\mathcal W_{2n-2}$ that have binary entries.  
   We relate these spaces by the map
   \begin{align}
       \tau: \mathcal B_{2n-2} &\to \mathcal T_n\\
       B &\mapsto T
   \end{align}
   where $T$ is the tree with vertices $[2n-1]$, leaves $[n]$, and for $ n-1 < i \leq  2n-1 $ and $ j<2n-1$ the tree $T$ has an edge $(i,j)$ iff $B(i,j) = 1$ or $B(i - (n-1),j) = 1$.  
\end{definition}

\begin{lemma}
    $\tau$ is well-defined and surjective. 
\end{lemma}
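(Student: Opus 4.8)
The plan is to treat $B \in \mathcal B_{2n-2}$ as a permutation matrix (a binary doubly stochastic matrix is exactly a permutation matrix) and to read off $T = \tau(B)$ as a parent-to-child assignment, proving the two assertions separately. For \emph{well-definedness} I must check that the directed graph produced by Def.\ \ref{def:trees} is genuinely a rooted binary tree with leaf set $[n]$; for \emph{surjectivity} I must exhibit, for every $T \in \mathcal T_n$, a matrix $B \in \mathcal B_{2n-2}$ with $\tau(B) = T$.

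For well-definedness, I would first use that $B$ is a permutation matrix: each of the $2n-2$ columns $j < 2n-1$ is the image of exactly one row, so each such vertex receives exactly one incoming edge, i.e.\ has a unique parent, while vertex $2n-1$ is never a child and will be the root. Pairing the two rows $i$ and $i-(n-1)$ associated with each parent index in Def.\ \ref{def:trees} shows every internal node emits exactly two edges, so internal nodes have out-degree $2$ while the vertices $[n]$ have out-degree $0$, yielding the claimed leaf set. The key use of the defining zero-pattern of $\mathcal W_{2n-2}$ (inherited by $\mathcal B_{2n-2}$) is to force every edge $(i,j)$ to satisfy $j < i$: each of the two constraints bounds the column of the nonzero entry of a row so that a child always carries a strictly smaller index than its parent. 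This monotonicity makes $T$ acyclic, and together with ``unique parent for every non-root vertex'' it guarantees that following parents terminates at $2n-1$; hence $T$ is connected, acyclic, and rooted at $2n-1$ with all edges directed away from the root, as required.

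For surjectivity I would construct a preimage explicitly. Given $T \in \mathcal T_n$, relabel its $n-1$ internal nodes by $n+1, \dots, 2n-1$ following a topological order of $T$ (possible since $T$ is a tree), placing the root at $2n-1$ and leaving the leaf labels in $[n]$ fixed. For each internal node I assign its two unordered children to the two matrix rows tied to that node in Def.\ \ref{def:trees}, one row in the upper block $\{1,\dots,n-1\}$ and one in the lower block $\{n,\dots,2n-2\}$, thereby defining a permutation $\sigma$ and hence $B$. Since every non-root vertex is the child of exactly one internal node, it occupies exactly one slot, so $B$ is a genuine permutation matrix; the topological labeling makes each child index smaller than its parent index, which is precisely what is needed to land inside the zero-pattern defining $\mathcal W_{2n-2}$, and a final verification that $\tau(B)=T$ closes the argument. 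I expect the \textbf{main obstacle} to be the index bookkeeping: confirming that the two defining constraints are \emph{exactly} the inequalities produced by ``child index $<$ parent index'' within the two blocks, and, on the surjective side, showing that one can always split an internal node's two children between the upper and lower blocks consistently with the block structure. The freedom from the children being unordered should make this split always possible, but checking that the constraints hold simultaneously for all rows is the delicate part.
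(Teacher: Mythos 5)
Your proposal is correct and takes essentially the same route as the paper's proof: the column and row sums of the permutation matrix give the unique-parent and out-degree-two structure, the zero-pattern of $\mathcal W_{2n-2}$ forces every child index strictly below its parent index (hence acyclicity), and surjectivity is obtained by topologically relabeling the internal vertices so ancestors get larger labels and then assigning each internal node's two children to its two associated rows. The only minor difference is how connectivity is finished --- you follow strictly increasing parent chains up to the root, while the paper rules out undirected cycles via the in-degree bound --- and your version is, if anything, the more explicit of the two.
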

\begin{proof}
First, we claim that for any $B \in \mathcal B_{2n-2}$ the image $\tau(B)$ is indeed a tree in $\mathcal T_n$. Note that since the columns of $B$ sum to 1, each vertex of $\tau(B)$ other than the root, which is vertex $2n-1$, has in-degree $1$. Similarly, since the row sums of $B$ are 1, each of the internal vertices (non-leaf vertices), which are the vertices $\{n+1,\dots,2n -1\}$, has out-degree 2. (For each internal vertex, there are two rows in $B$ that give its children.) Furthermore, $\tau(B)$ has no edges $(i,j)$ where $i\leq j$ as the only entries in $B$ that can give rise to such edges are zero by (1.) and (2.) of Def.\ \ref{def:trees}. We can then conclude that $\tau(B)$ is a directed acyclic graph. Furthermore, $\tau(B)$ has no cycles (regardless of direction) since for such a cycle to not be a directed cycle it must contain a vertex with in-degree greater than 1. 

For surjectivity, consider an arbitrary $T \in \mathcal T_n$. Let $\phi: \{n+1,\dots, 2n-1\} \to \{n+1, \dots, 2n-1\} $ be an enumeration of the internal vertices of $T$ that respects topological order, i.e., $\phi(i) > \phi(j)$ implies $j$ is not a descendent of $i$. For each internal vertex $i$ in $T$ with children $j,k$ let $B(i,j) = 1$ and $B(i + n-1,k) = 1$. Since $\phi$ respects the topological order, we can guarantee there are no entries $B(i,j) > 0$ with  either $i \leq j$ and $i > n-1$ or $i 
 +(n-1) \leq j 
 $ and $i \leq n-1$, so  (1.) and (2.) of Def.\ \ref{def:trees} are satisfied and $B \in \mathcal B_{2n-2}$. This $B$ satisfies $\tau(B) = T$, thus, $\tau$ is surjective. 
\end{proof}

Matrices in the space $\mathcal W_{2n-2}$ can be decomposed using Birkhoff decomposition as $\mathcal W_{2n-2}\subset \mathcal D_{2n-2}$. Additionally, if $W \in \mathcal W_{2n-2} $ then its Birkhoff decomposition only contains permutations in $\mathcal B_{2n-2}\subset \mathcal P_{2n-2}$. By this fact, we are then free to apply Birkhoff extension to extend any function on $\mathcal B_{2n-2}$ to a function $F: \mathcal W_{2n-2} \to \mathbb R$, even if $f$ is not defined for $\mathcal P_{2n-1} \setminus \mathcal B_{2n-1}$. 

Additionally, we can extend functions on trees $\mathcal T_n$ to functions on $\mathcal W_{2n-2}$; the procedure is as follows. First $f$ is composed with $\tau$ to yield a function $f \circ \tau:  \mathcal B_{2n-2} \to \mathbb R$. Then Birkhoff extension is used to extend this function to $F: \mathcal W_{2n-2}\to \mathbb R$. If the extension $F$ is optimized to find some solution $W \in \mathcal W_{2n-2}$ the Birkhoff extension rounding scheme can be used to find a $B \in \mathcal B_{2n-2}$ such that $f(\tau(B)) \leq F(W).$ Here, $\tau(B)$ is a tree $T$ satisfying $f(T) \leq F(W)$, so this procedure can be used to optimize $f$. 

\section{Problem Details} \label{exp:prob}
In this section we give more detailed problem definitions and show the integer linear programs we employ for optimization using Gurobi. 

\subsection{Traveling Salesperson Problem}
\begin{definition}
Given a set of $n$ cities \( \{1, 2, \ldots, n\} \) and a distance \( d_{i,j} \) between each pair of cities \( i \) and \( j \),  the traveling salesperson problem (TSP) is to find the permutation $\pi: [n] \to [n]$ that minimizes
\begin{align}
\sum_{k=1}^{n}  d_{\pi(k),\pi(k+1 \pmod{n} )}
\end{align}
\end{definition}

TSP can be formulated as an integer linear program using subtour elimination constraints such as in the  Miller-Tucker-Zemlin formulation \cite{miller1960integer}. For optimization with Gurobi we use the formulation given in \cite{tsp.py}.
 
Our experiments use the integer quadratic program $\min_{P\in \mathcal{P}_n} \sum_{i=1}^{n-1} P^T(i) D P(i+1) + P^T(n) D P(1)$ and its relaxation $\min_{P\in \mathcal{D}_n} \sum_{i=1}^{n-1} P^T(i) D P(i+1) + P^T(n) D P(1)$ for TSP. Here $P(i)$ denotes the $i$-th column of the square matrix $P$.

\subsection{DFASP}
\begin{definition}
    Let \( G = (V, E) \) be a directed graph where \( V \) is the set of vertices and \( E \) is the set of directed edges. A feedback arc set in \( G \) is a subset of edges \( F \subseteq E \) such that the subgraph \( G' = (V, E \setminus F) \) is acyclic. The Directed Feedback Arc Set Problem (DFASP) is to find the minimum feedback arc set, i.e., the smallest subset of edges \( F \) whose removal makes the graph \( G \) acyclic. 
\end{definition}

\noindent Alternatively, DFASP can be formulated as a vertex ordering (i.e.\ permutation) problem, where the goal is to find an enumeration of vertices $\{v_i\}_{i=1}^n$ that minimizes the cardinality of the set of backward edges. Here we give an LP formulation. Let \( x_{ij} \) be a binary variable that equals 1 if the edge \( (i, j) \in E \) is a backward edge, and 0 otherwise. Let \( y_i \) be an integer variable for each vertex \( i \in V \) representing the position of vertex \( i \) in a topological ordering. DFASP can be formulated using the following integer linear program of \cite{baharev2021exact}.
\begin{mini*}
{}{\sum_{(i, j) \in E} x_{ij}}{}{}
\addConstraint{x_{ij} & \in \{0, 1\}} {\forall (i, j) \in E}{}{}
\addConstraint{y_i & \in \mathbb{Z}} {\forall i \in V}{}{}
\addConstraint{y_i + 1 & \leq y_j + |V| \cdot x_{ij} \quad} {\forall (i, j) \in E}{}{}
\end{mini*}
\subsection{QAP}
\begin{definition}
Given two non-negative $n \times n$ matrices $D$ and $L$ find a permutation $\pi: [n] \to [n]$ that minimizes 
\begin{equation}
   \sum_{i,j \in [n]} D(i,j)L(\pi(i),\pi(j)).
\end{equation}
\end{definition}
We use the Kaufman-Broeckx integer linear program formulation \cite{kaufman1978algorithm} for the Gurobi implementation. 

\section{Experiments Details}
\label{exp:detail}
In this section we provide additional experimental results for the TSP, DFASP and QAP. 

\subsection{Optimization Curves}
Here, we show the optimization curves for each experiment. 
\\

\noindent Below are plots of the averaged optimization curves for TSP on problem instances of different scales. The $x$-axis gives the number of steps $t$ and the $y$-axis gives solution tour length, averaged over the $N=50$ instances.    
\\
\includegraphics[width = 0.33\textwidth]{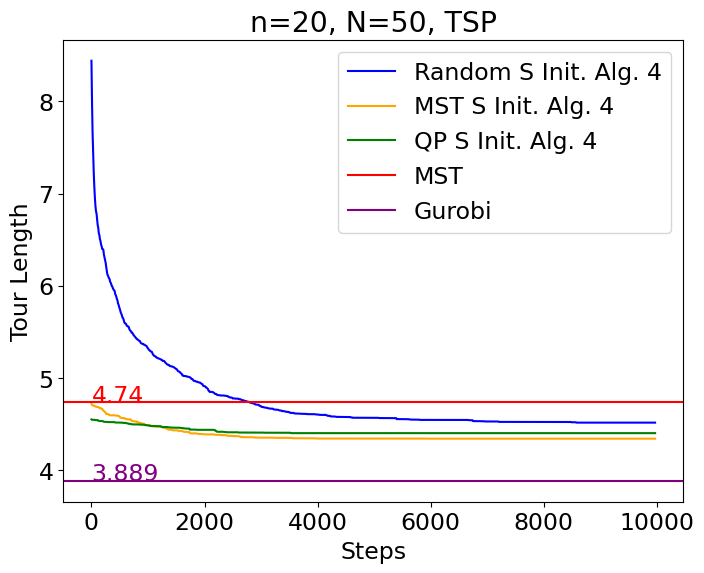}
\includegraphics[width = 0.33\textwidth]{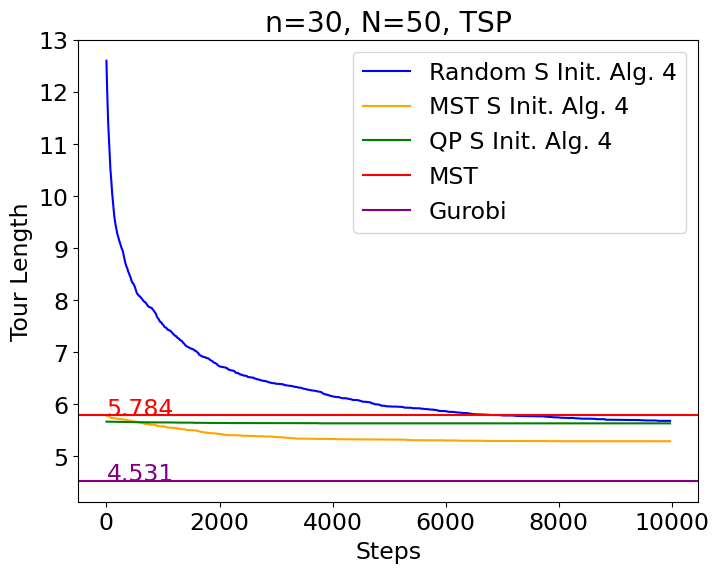}
\includegraphics[width = 0.33\textwidth]{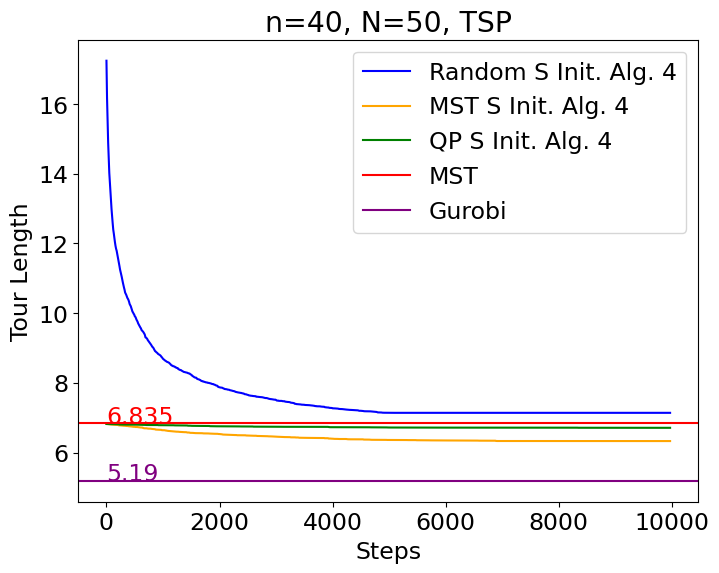}
\\
\includegraphics[width = 0.33\textwidth]{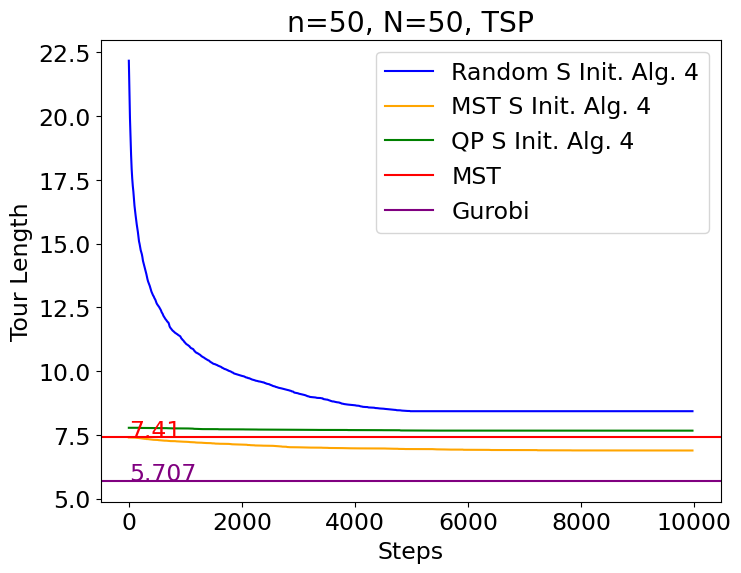}
\includegraphics[width = 0.33\textwidth]{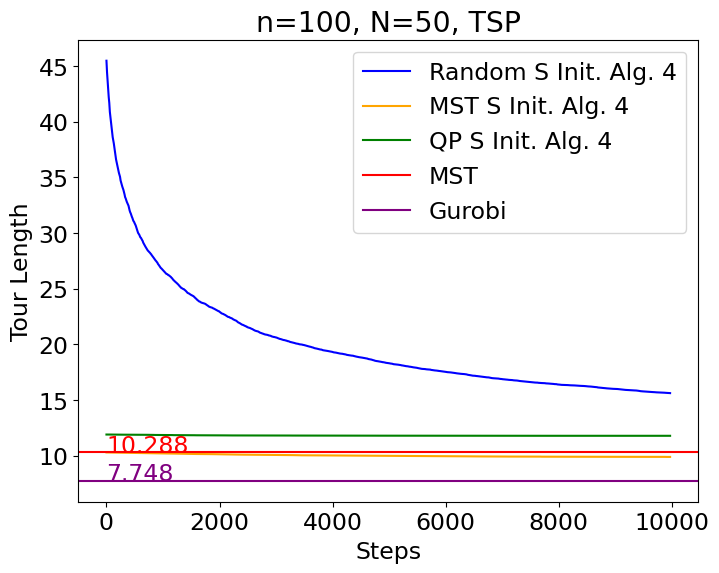}
\\
\noindent Below are plots of the optimization curves for DFASP. The $x$-axis gives the number of steps $t$ and the $y$-axis gives the average cardinality of the feedback arc set returned by the optimization algorithm. The runtime limit for both methods is $\frac{n}{10}$ minutes (e.g. 2 minutes for $n=20$, 5 minutes for $n=50$ and 10 minutes for $n=100$). Results are averaged across $N = 50$ instances.
\\
\includegraphics[width = 0.33\textwidth]{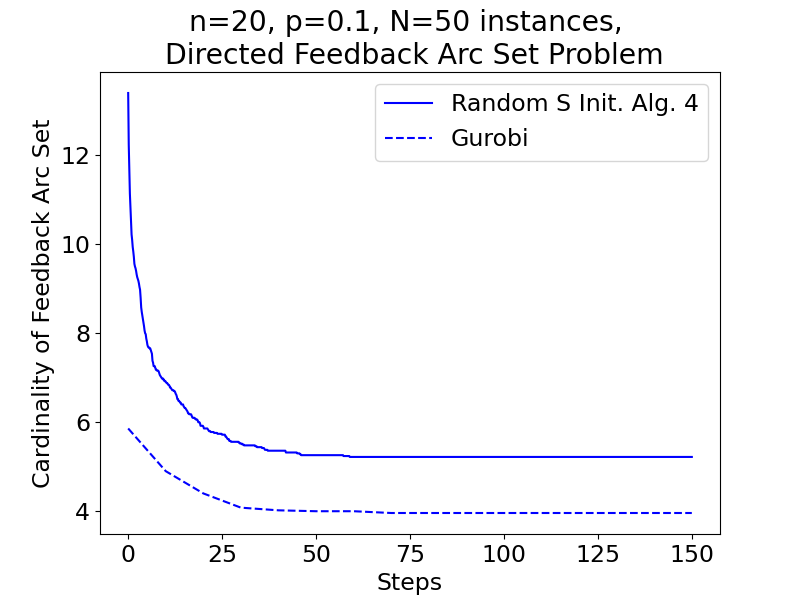}
\includegraphics[width = 0.33\textwidth]{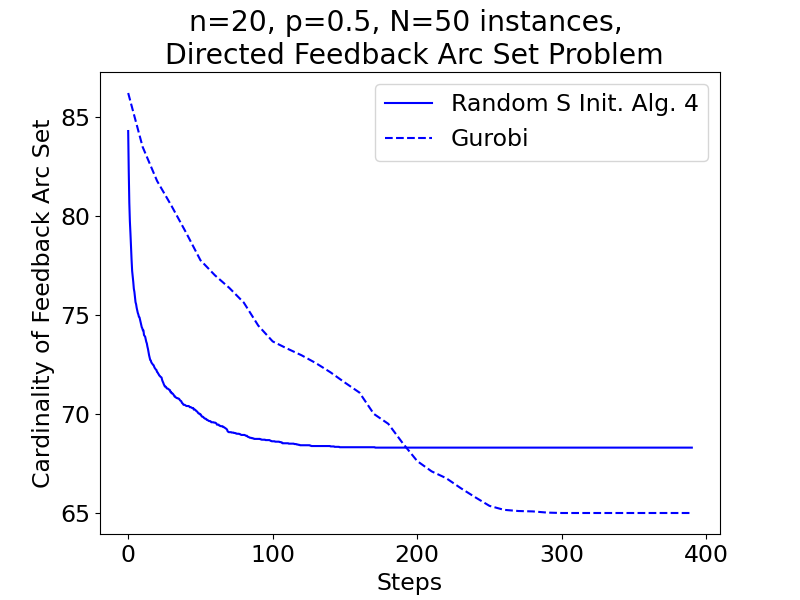}
\includegraphics[width = 0.33\textwidth]{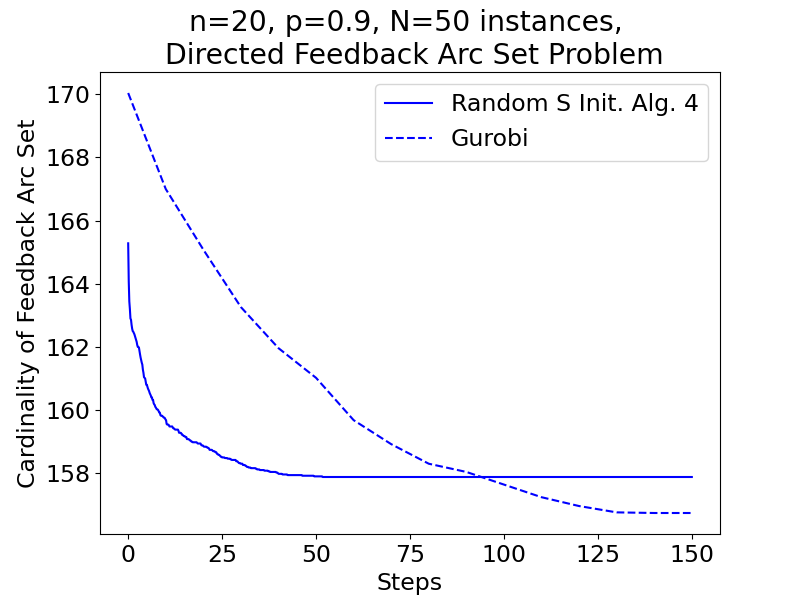}
\\
\includegraphics[width = 0.33\textwidth]{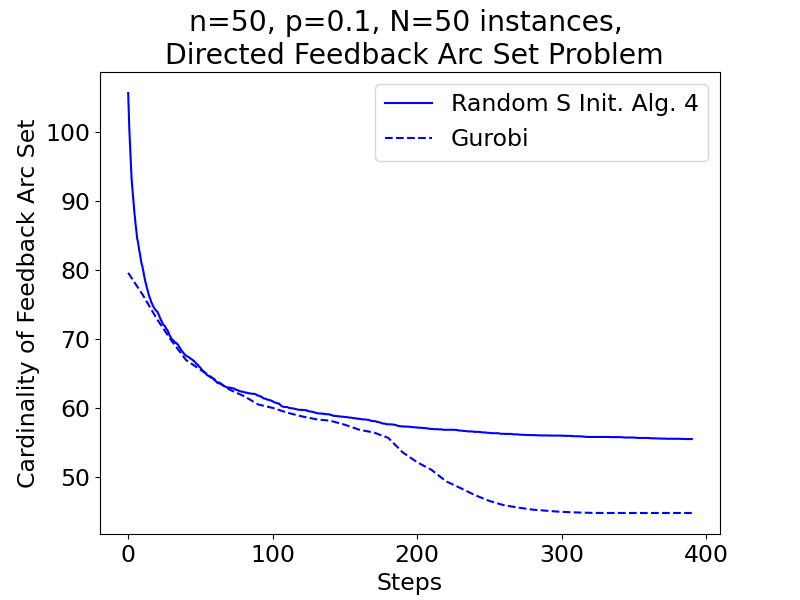}
\includegraphics[width = 0.33\textwidth]{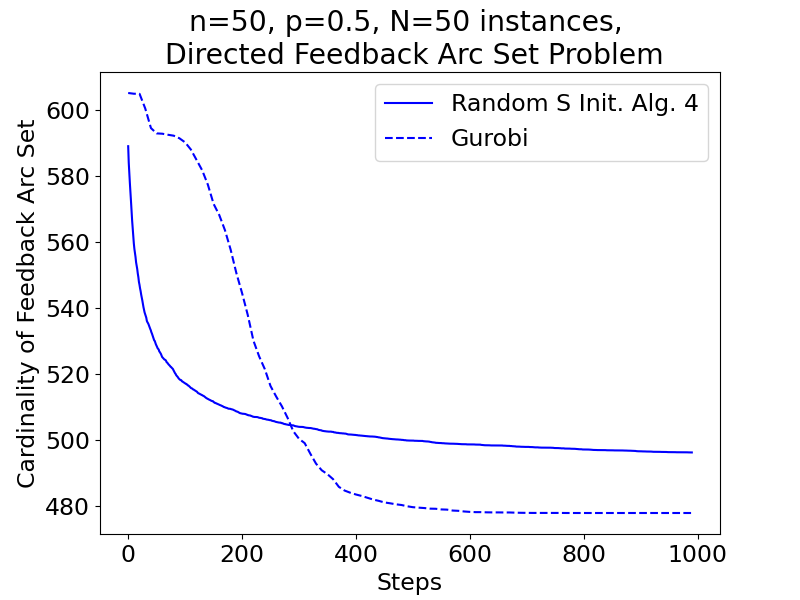}
\includegraphics[width = 0.33\textwidth]{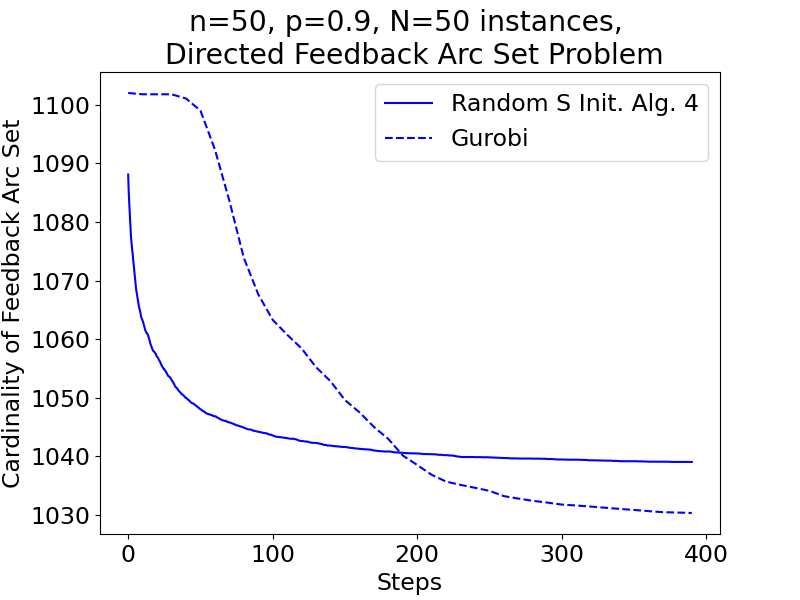}
\\
\includegraphics[width = 0.33\textwidth]{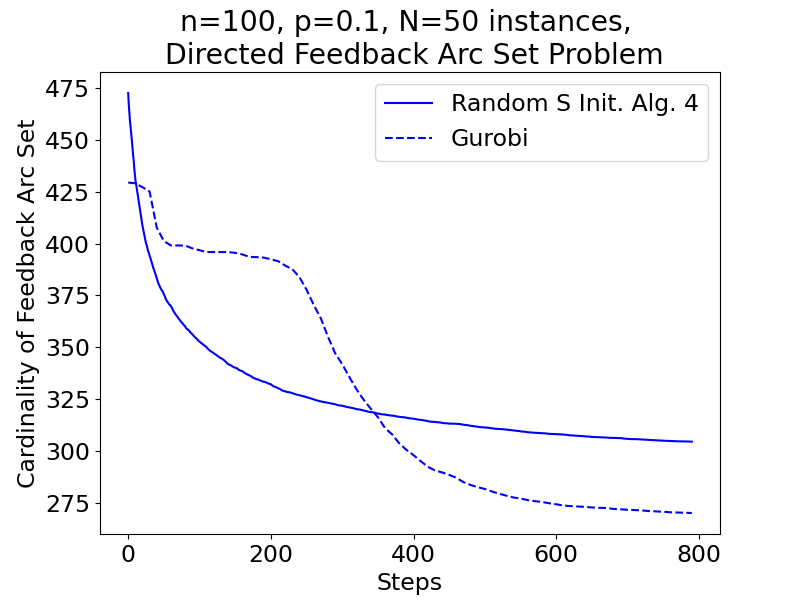}
\includegraphics[width = 0.33\textwidth]{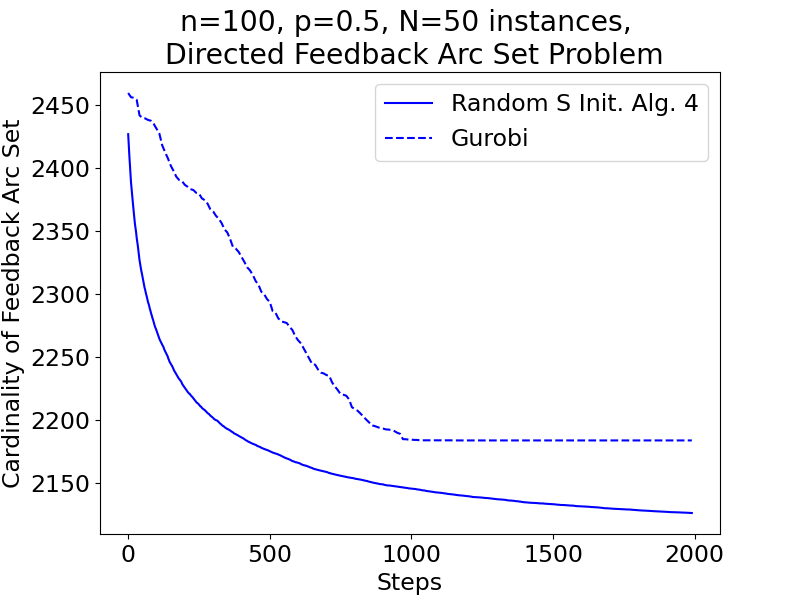}
\includegraphics[width = 0.33\textwidth]{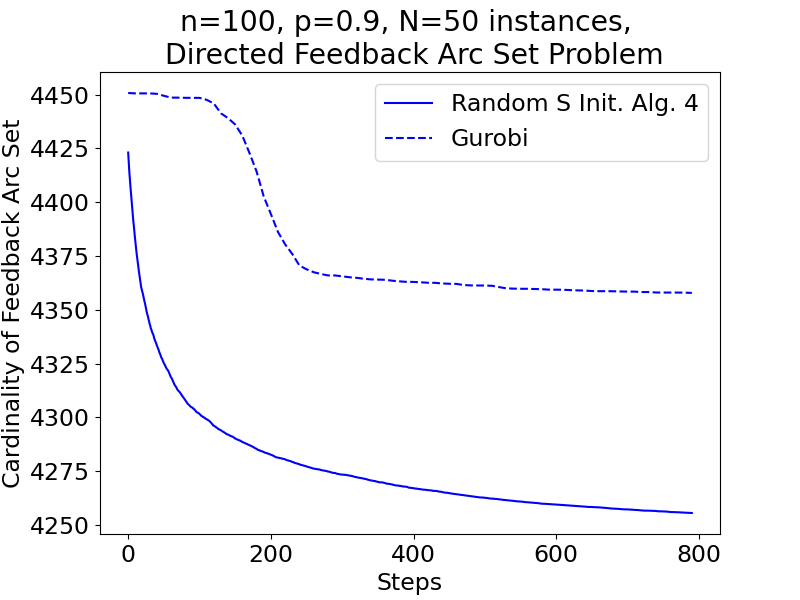}
\\

\subsection{Timing}
We are using CentOS Linux 7 system on a Intel Xeon CPU E5-2650L processor. Gurobi is run on a single thread per process. We report average runtime for each method for different scales of TSP and QAP problems, see Table \ref{tab:tsp runtime} below. DFASP is not included as we fix runtime for this problem

\begin{table}
\begin{center}
\centering
\begin{tabular}{c|c|c|c|c|c|c|c|c|c|c}
\toprule
\multirow{2}{*}{\textbf{Method}} & \multicolumn{5}{c|}{\textbf{TSP}} \\
\cmidrule{2-6}
& \textbf{n = 20} & \textbf{n = 30} & \textbf{n = 40} & \textbf{n = 50} & \textbf{n = 100} \\

\midrule
Gurobi & $<$ 1s & $<$ 1s & $<$ 1s & $<$ 1s & $<$ 1s \\
\midrule
MST & $<$ 1s & $<$ 1s & $<$ 1s & $<$ 1s & $<$ 1s \\
QP & 36s & 50s & 69s & 95s & 240s \\
\midrule

Random $S$ Init. Alg.\ \ref{alg:DFW} & 192s & 259s & 612s & 671s & 967s \\
MST $S$ Init. Alg.\ \ref{alg:DFW} & 156s & 214s & 571s & 640s & 1318s\\
QP $S$ Init. Alg.\ \ref{alg:DFW} & 135s & 249s & 532s & 615s & 953s \\
\midrule

\end{tabular} \\
\caption{Average runtime of TSP algorithms in seconds.} \label{tab:tsp runtime}
\end{center}
\end{table}

\section{Ablation Study}\label{app:ablation}
This section provides additional ablation studies and extended experiments to support the main results. 
We examine the effects of truncation depth $k$, score-update frequency $m$, initialization strategy, optimization method, and instance size. 
Each subsection specifies the experimental setup used to isolate the respective factor.

\subsection{Effect of Truncation Depth $k$}
To analyze the influence of the truncation depth $k$ in the Birkhoff decomposition, we conduct controlled experiments on a subset of 14 QAP instances selected from \textbf{QAPLIB}, with problem sizes ranging from $n=12$ to $n=35$. 
All experiments use identical hyperparameters: learning rate $\eta=0.001$, total runtime $t=2n$ seconds, and optimization via the Frank--Wolfe (FW) algorithm with random initialization. 
We vary $k \in \{3, 5, 10, 15, 20\}$ and report the average gap from the optimal objective.
Figure \ref{fig:truncation} below shows that performance improves as $k$ increases up to $k{=}10$, then the performance decreases. 
This indicates that a moderate truncation depth provides a good balance between efficiency and solution quality.

\begin{figure}[htbp]
    \centering
    \begin{minipage}{0.48\textwidth}
        \centering
        \includegraphics[width=\linewidth]{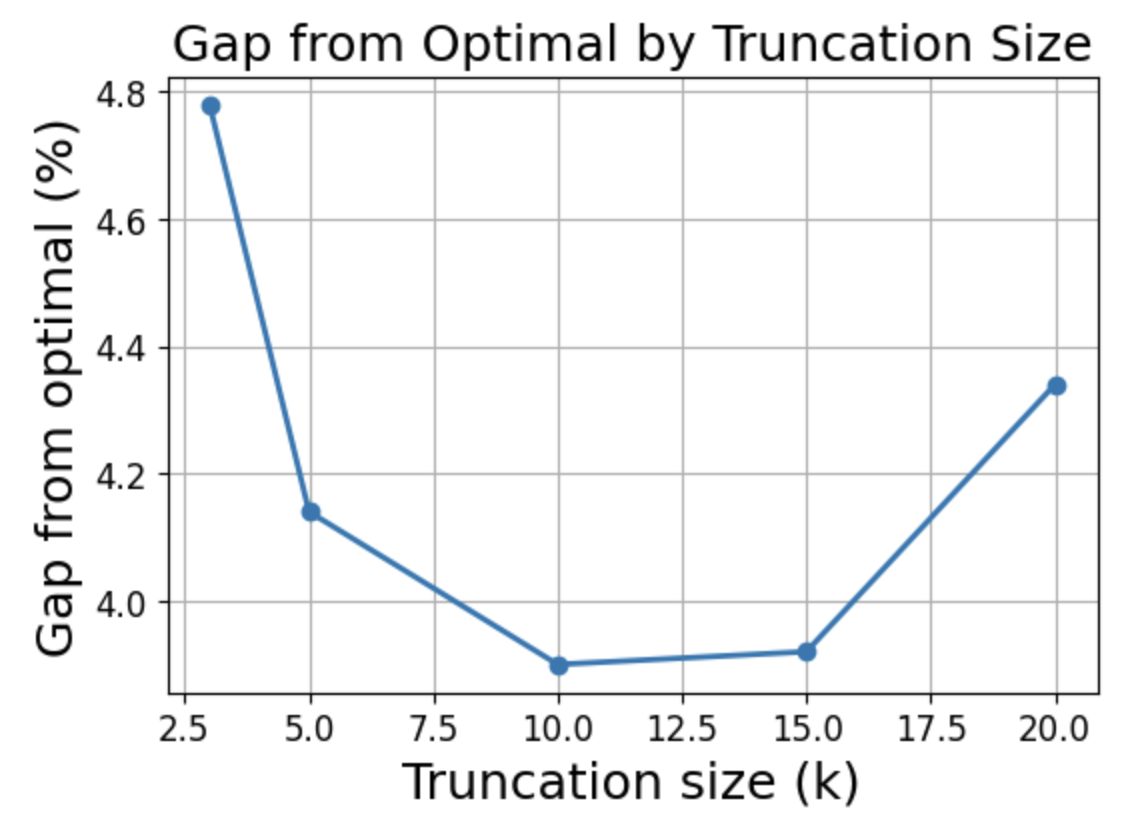}
        \caption{Gap vs. truncation size}
        \label{fig:truncation}
    \end{minipage}
    \hfill
    \begin{minipage}{0.48\textwidth}
        \centering
        \includegraphics[width=\linewidth]{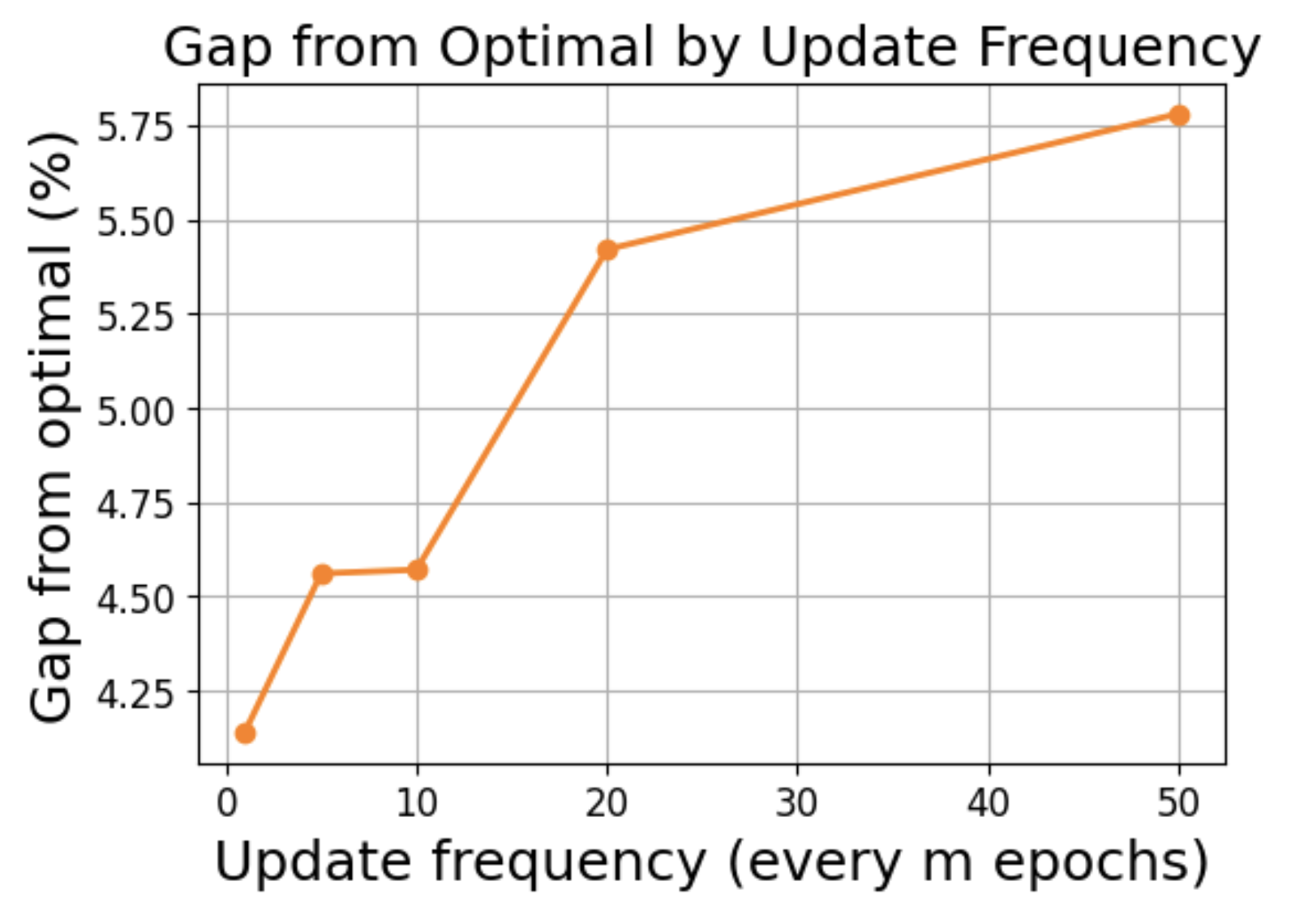}
        \caption{Effect of score-update frequency}
        \label{fig:score_update}
    \end{minipage}
\end{figure}

\subsection{Effect of Score-Update Frequency}
We study the impact of the score-update frequency $m$, which controls how often the scoring matrix is recomputed during optimization. 
The same 14 QAPLIB instances and identical optimization settings are used as in the truncation experiments, fixing $k{=}5$. 
We test $m \in \{1,5,10,20,50\}$, where $m{=}1$ means the score matrix is refreshed at every iteration. 
Figure \ref{fig:score_update} show that more frequent updates consistently yield smaller optimality gaps, demonstrating that dynamic score re-weighting effectively helps the optimizer escape poor local minima.

\subsection{Initialization Strategy}
We compare \textbf{random initialization} and \textbf{barycenter initialization} to understand their effects on optimization stability. 
Both configurations are evaluated on the same 14 QAPLIB instances with identical hyperparameters ($k{=}5$, $\eta{=}0.001$, runtime $t{=}2n$ seconds). 
In the barycenter case, optimization starts from the uniform doubly stochastic matrix, while the random case samples each entry uniformly from $[0,1]$ before projecting onto the Birkhoff polytope via Sinkhorn iterations.
Both use the same Frank--Wolfe optimizer. 
The comparison shows that random initialization achieves a slightly smaller average gap to optimal (4.14\% vs.\ 4.42\%), see results below in Table \ref{tab:bary_vs_rand_init}, suggesting that stochasticity in initialization provides beneficial diversity in the optimization trajectory.

\begin{table}[h]
\centering
\resizebox{0.6\columnwidth}{!}{
\begin{tabular}{l|c}
\toprule
\textbf{Initialization Method} & \textbf{Avg. Gap from Optimal (\%)} $\downarrow$  \\
\midrule
Random Initialization & \textbf{4.14} \\
BaryCenter Initialization & 4.42 \\
\bottomrule
\end{tabular}
}
\caption{Comparison of initialization strategies on 14 QAPLIB instances ($n=12$–$35$).  
Both methods use identical hyperparameters ($k=5$, $\eta=0.001$, runtime $2n$ s).  
Random initialization achieves a slightly lower average gap from optimal than barycenter initialization.}
\label{tab:bary_vs_rand_init}
\end{table}

\subsection{Optimization Method Comparison}
We compare two continuous optimization schemes over the Birkhoff polytope: (1) gradient descent with Sinkhorn projection and (2) the Frank-Wolfe (FW) algorithm proposed in this work. 
Both methods operate on the same 14 QAPLIB instances, with fixed $k{=}5$, learning rate $\eta{=}0.001$, and runtime $t{=}2n$ seconds. 
For the gradient-based variant, we apply 10 Sinkhorn iterations after each gradient step to enforce double stochasticity.
FW achieves a lower average optimality gap (4.14\%) than Gradient+Sinkhorn (4.55\%), see results in Table \ref{tab:fw_vs_sinkhorn} below. 

\begin{table}[h]
\centering
\resizebox{0.55\columnwidth}{!}{
\begin{tabular}{l|cc}
\toprule
\textbf{Optimization Method} & \textbf{Avg. Gap from Optimal (\%)} $\downarrow$ \\
\midrule
Gradient + Sinkhorn Projection & 4.55 \\
Frank--Wolfe (ours) & \textbf{4.14} \\
\bottomrule
\end{tabular}
}
\caption{Comparison between Frank--Wolfe and Gradient + Sinkhorn optimization over the Birkhoff polytope on 14 QAPLIB instances (sizes $n{=}12$–$35$).  
Frank--Wolfe achieves a lower average optimality gap.
All experiments use identical hyperparameters ($k{=}5$, $\eta{=}0.001$, runtime $2n$ s).}
\label{tab:fw_vs_sinkhorn}
\end{table}

\subsection{Scalability on Large-Scale QAP Instances}
To assess scalability, we perform large-scale experiments on \textbf{20 synthetic random QAP instances} of size $n{=}1000$. 
Because QAPLIB includes no instances larger than $n{=}256$, we generate our own dataset as follows: 
we uniformly sample 1000 facility points and 1000 location points in $[0,1]^2$. 
The distance matrix $D$ is constructed from pairwise Euclidean distances among locations, and the flow matrix $F$ is defined as the inverse of pairwise distances among facilities. 
We evaluate BE using both Frank--Wolfe and Gradient+Sinkhorn optimizers under random and FAQ initialization. 
Each configuration is run with runtime limits of $2000$ and $4000$ seconds, corresponding to approximately $10{,}000$ and $20{,}000$ optimization epochs, respectively.
We also include comparisons with Fast Approximate QAP (FAQ) and 2-opt baselines, see the results below in Table \ref{tab:large_scale_qap_results}.

\begin{table}[h]
\centering
\resizebox{\columnwidth}{!}{
\begin{tabular}{l|cccc}
\toprule
\textbf{Method} & \textbf{Run-time (secs)} & \textbf{\# of epochs} & \textbf{Avg. Objective} $\downarrow$ & \textbf{Gap from best \%} $\downarrow$ \\
\midrule
Random assignment & --- & --- & 138,005,401 & 77.37\% \\
2-opt (Restart=3) & 2,000 & --- & 119,216,266 & 53.13\% \\
2-opt (Restart=5) & 2,000 & --- & 109,965,065 & 41.28\% \\
FAQ & 345 & --- & 77,894,862 & 0.10\% \\
Ours (BE w/ Grad+Sink, Random Init.) & 2,000 & 10,133 & 100,894,812 & 29.64\% \\
Ours (BE w/ Grad+Sink, Random Init.) & 4,000 & 20,064 & 90,244,871 & 15.97\% \\
Ours (BE w/ FW, Random Init.) & 2,000 & 10,024 & 102,393,212 & 31.57\% \\
Ours (BE w/ FW, Random Init.) & 4,000 & 20,091 & 89,114,150 & 14.40\% \\
Ours (BE w/ FW, FAQ Init.) & 2,000 & 10,102 & 77,819,625 & \textbf{0.00\%} \\
\bottomrule
\end{tabular}
}
\caption{Comparison of QAP solution methods on large-scale instances.}
\label{tab:large_scale_qap_results}
\end{table}

\section{Escaping Local Minima via Score Matrix Modification}
\label{app:escape}

In this section, we provide theoretical justification for Algorithm 4's strategy of modifying score matrices to escape local minima. We first establish a key lemma about the linearity of our Birkhoff extension along certain directions, then show that when the current solution is a local (but not global) minimum, there always exists a score matrix that allows escape from this local minimum.

\begin{lemma}[Linearity Along Highest-Score Direction]
\label{lemma:linearity}
Let $f: \mathcal{P}_n \to \mathbb{R}$ be any function on permutations, and let $F_S$ be the Birkhoff extension of $f$ induced by score matrix $S$. Let $P$ be a permutation matrix that has the highest score under $S$. Then for any $A \in \mathcal{D}_n$ and $\beta \in [0,1]$:
\[
F_S((1-\beta)A + \beta P) = (1-\beta)F_S(A) + \beta f(P)
\]
\end{lemma}

\begin{proof}
Let $A = \sum_{k=1}^M \alpha_k P_k$ be the Birkhoff decomposition of $A$ under score matrix $S$. Consider $(1-\beta)A + \beta P$. 

We consider two cases:

Case 1: If $P \neq P_1$ (i.e., $P$ is not the first permutation in the decomposition of $A$), then the Birkhoff decomposition of $(1-\beta)A + \beta P$ is simply $\beta P + (1-\beta)\sum_{k=1}^M \alpha_k P_k$. This follows because:
\begin{itemize}
\item $P$ appears first with coefficient $\beta$ since it has highest score
\item The remaining terms are proportional to $(1-\beta)$ times the original decomposition
\end{itemize}
Therefore in this case:
\[F_S((1-\beta)A + \beta P) = \beta f(P) + (1-\beta)F_S(A)\]

Case 2: If $P = P_1$ (i.e., $P$ is the first permutation in the decomposition of $A$), then for any entry $(i,j)$ where $P(i,j)=1$:
\begin{align*}
((1-\beta)A + \beta P)(i,j) &= (1-\beta)(\alpha_1 + \sum_{k=2}^M \alpha_k P_k(i,j)) + \beta\\
&= (1-\beta)\alpha_1 + \beta + (1-\beta)\sum_{k=2}^M \alpha_k P_k(i,j)
\end{align*}

By definition of the Birkhoff decomposition, there exists at least one entry $(i^*,j^*)$ where $P(i^*,j^*)=1$ such that $\sum_{k=2}^M \alpha_k P_k(i^*,j^*) = 0$ (this is what determines $\alpha_1$ in the original decomposition). For this entry:
\[((1-\beta)A + \beta P)(i^*,j^*) = (1-\beta)\alpha_1 + \beta\]

This means that in the Birkhoff decomposition of $(1-\beta)A + \beta P$:
\begin{itemize}
\item $P$ must appear first (since it has highest score) with coefficient exactly $(1-\beta)\alpha_1 + \beta$ (since this is the minimum value at any position where $P$ has a 1)
\item After subtracting $((1-\beta)\alpha_1 + \beta)P$, the remaining matrix is proportional to $(1-\beta)$ times the matrix obtained after removing $\alpha_1P$ from $A$'s decomposition
\end{itemize}

Therefore:
\begin{align*}
F_S((1-\beta)A + \beta P) &= ((1-\beta)\alpha_1 + \beta)f(P) + (1-\beta)\sum_{k=2}^M \alpha_k f(P_k) \\
&= ((1-\beta)\alpha_1)f(P) + \beta f(P) + (1-\beta)\sum_{k=2}^M \alpha_k f(P_k) \\
&= (1-\beta)(\alpha_1f(P) + \sum_{k=2}^M \alpha_k f(P_k)) + \beta f(P) \\
&= (1-\beta)F_S(A) + \beta f(P)
\end{align*}

In both cases we obtain $F_S((1-\beta)A + \beta P) = (1-\beta)F_S(A) + \beta f(P)$.
\end{proof}

\begin{proof}[Proof of Thm\ \ref{thm:escape}]

We choose $S$ as $S = P^*$ where $P^*$ is some minimizer of $f$.  This ensures that $P^*$ has strictly higher score than any other permutation under $S$. For any permutation matrix $P \neq P^*$, the score of $P^*$ under $S$ is $n$ (since each permutation has exactly $n$ 1's), while the score of $P$ under $S$ is equal to the number of positions where $P$ and $P^*$ agree, which is strictly less than $n$. Therefore, in the Birkhoff decomposition induced by $S$, $P^*$ appears as the first term.

Recall that a vector $v$ is called a descent direction at point $x$ if moving infinitesimally in direction $v$ decreases the function value. This can be verified by checking if the directional derivative is negative. For our extension $F_{'}$, the directional derivative at $A$ in direction $P^* - A$ is:
\begin{align*}
\langle \nabla F_{S}(A), P^* - A \rangle &= \lim_{\epsilon \to 0} \frac{F_{S}(A + \epsilon(P^* - A)) - F_{S}(A)}{\epsilon} \\
&= \lim_{\epsilon \to 0} \frac{F_{S}((1-\epsilon)A + \epsilon P^*) - F_{S'}(A)}{\epsilon} \\
&= \lim_{\epsilon \to 0} \frac{(1-\epsilon)F_{S}(A) + \epsilon f(P^*) - F_{S}(A)}{\epsilon} \tag{by Lem.\ \ref{lemma:linearity}}\\
&= \lim_{\epsilon \to 0} \frac{\epsilon(f(P^*) - F_{S}(A))}{\epsilon} \\
&= f(P^*) - F_{S}(A).
\end{align*}
The third equality holds by Lem.\ \ref{lemma:linearity} since our construction of $S$ ensures $P^*$ appears first in any Birkhoff decomposition.

If $f(P^*) - F_{S}(A) = 0$ then $A$ must be a global minimum of $F_{S}$. If $f(P^*) - F_{S}(A) < 0$ then 
$P^* - A$ is a descent direction for $F_{S}$ at $A$.  In either case, $A$ is not a local minimum of $F_{S}.$

\end{proof}

This theorem provides theoretical justification for Algorithm 4's strategy of modifying score matrices when stuck at local minima. It shows that as long as better solutions exist (i.e., permutations with lower objective value), there always exists a score matrix that enables escape from the current local minimum. 

\section{Comparison with Convex Closure}
\label{app:convex_closure}

In this section, we compare our Birkhoff extension with the convex closure, a well-known theoretical concept in convex analysis. While the convex closure provides certain theoretical guarantees, our Birkhoff extension offers practical computability, making it more suitable for real-world optimization problems.

\subsection{Definition and Properties of Convex Closure}

The convex closure of a function $f: \mathcal{P}_n \to \mathbb{R}$ is defined as the greatest convex function that is pointwise less than or equal to $f$ on the domain of permutation matrices. For our purposes, since we are extending $f$ from $\mathcal{P}_n$ to $\mathcal{D}_n$, the convex closure can be expressed as:

\begin{definition}[Convex Closure]
Given a function $f: \mathcal{P}_n \to \mathbb{R}$, its convex closure $\text{Conv}(f): \mathcal{D}_n \to \mathbb{R}$ is defined as:
\[
\text{Conv}(f)(X) = \inf \left\{ \sum_{i=1}^{k} \alpha_i f(P_i) \mid X = \sum_{i=1}^{k} \alpha_i P_i, \alpha_i \geq 0, \sum_{i=1}^{k} \alpha_i = 1 \right\}
\]
where the infimum is taken over all possible Birkhoff decompositions of $X$.
\end{definition}

The convex closure is convex by definition, agrees with $f$ on all permutation matrices, and for minimization problems, minimizing $\text{Conv}(f)$ over $\mathcal{D}_n$ gives the same optimal value as minimizing $f$ over $\mathcal{P}_n$.

\subsection{Computational Challenges and Advantages of Our Approach}

Despite its theoretical appeal, computing the convex closure for a general function $f$ and doubly stochastic matrix $X$ is computationally intractable. Determining the convex closure requires solving a combinatorial optimization problem to find the optimal set of permutation matrices and their weights, which is generally NP-hard for large $n$.

In contrast, our score-based Birkhoff extension can be computed in $O(n^5)$ time for any doubly stochastic matrix. It is continuous and almost everywhere differentiable, enabling the use of gradient-based optimization methods. While not necessarily convex, our extension preserves the property that minimizing over $\mathcal{D}_n$ and rounding gives the same value as minimizing directly over $\mathcal{P}_n$ (Property 3). The flexibility provided by different score matrices allows for exploration of the solution space and escape from local minima.

The computational efficiency of our Birkhoff extension makes it applicable to practical problems where the convex closure would be intractable to compute. For optimization problems over permutations, we can use gradient-based methods efficiently, integrate with machine learning approaches like neural networks, and handle larger problem instances than would be possible with exact convex closure computations.

In summary, while the convex closure provides a theoretically optimal convex relaxation, our Birkhoff extension offers a practical, computable alternative that preserves many desirable properties while enabling efficient optimization.
\section{Generalization to Constrained Permutations} 

\label{app:constrained}

The framework presented in this paper can be naturally extended to consider only a subset of permutations that satisfy certain constraints. This generalization is particularly useful in applications where not all permutations are valid or desirable solutions. The key insight is that our framework can work with any subset of permutations for which we can efficiently solve the corresponding matching problem under the given constraints.

\subsection{Formal Setup}

Let $\mathcal C_n \subseteq \mathcal P_n$ be a subset of permutation matrices that satisfy some constraints. We define the \emph{constrained Birkhoff polytope} $\mathcal D_n(\mathcal C)$ as the convex hull of $\mathcal C_n$:

\begin{equation}
    \mathcal D_n(\mathcal C_n) = \left\{ A \in \mathbb R^{n \times n} \mid A = \sum_{P \in \mathcal C_n} \alpha_P P, \sum_P \alpha_P = 1, \alpha_P \geq 0 \right\}
\end{equation}

For this framework to be practical, we require two key properties of the constraint set $\mathcal C_n$:

\begin{enumerate}
    \item \textbf{Efficient Matching}: We must be able to efficiently solve the constrained matching problem $\mathrm{argmax}_{P \in \mathcal C_n} \langle P,S \rangle$ for any score matrix $S$. This same matching problem is used in two contexts in our framework: (1) finding the highest scoring permutation in the decomposition algorithm, and (2) finding the permutation with greatest inner product with the gradient in the Frank-Wolfe optimization step. The matching problem must be modified to respect the given constraints while still finding the permutation that maximizes the inner product.
    
    \item \textbf{Efficient Initialization}: We must be able to efficiently generate points within the constrained polytope $\mathcal D_n(\mathcal C)$. This is crucial because optimization methods like Frank-Wolfe require feasible starting points. For many constraint types, finding such initial points is straightforward. 
\end{enumerate}

The continuous decomposition framework can be adapted to work with $\mathcal D_n(\mathcal C_n)$ as follows:

\begin{algorithm}[H]
\begin{algorithmic}
\Require $A \in \mathcal D_n(\mathcal C)$, identifying score matrix $S$

\Ensure $\{(\alpha_k,P_k)\}_{k=1}^{M}$ s.t.\ $A = \sum_{k = 1}^{M}\alpha_kP_k$, \\ \hspace{-.45 cm} $\sum_k \alpha_k =1$, $\alpha_k > 0$, and $P_k \in \mathcal C_n$.\\
\State $k \gets 1$, $B \gets A$
\While{$B \neq 0$}
\State $P_k \gets \mathrm{argmax}_{P \in \mathcal C_n} \langle P,S \rangle$
\State $\alpha_k \gets \min_{ij}\{B(i,j) \mid P_k(i,j) = 1\}$
\State $B \gets B - \alpha_k P_k$
\State $k++$
\EndWhile
\State $M \gets k$
\State \Return $\{(\alpha_k,P_k)\}_{k=1}^{M}$
\end{algorithmic}
\caption{Constrained Continuous Birkhoff decomposition}
\label{alg:constrained birkhoff}
\end{algorithm}

For optimization over the constrained polytope, we use the Frank-Wolfe algorithm with constrained matching:

\begin{algorithm}[H]
\begin{algorithmic}
\Require Initial point $A_0 \in \mathcal D_n(\mathcal C)$, score matrix $S$, step sizes $\{\gamma_t\}_{t=1}^T$

\Ensure Final doubly stochastic matrix $A_T \in \mathcal D_n(\mathcal C)$
\State $A_0 \gets \mathrm{initialize}(\mathcal D_n(\mathcal C))$
\For{$t = 1$ to $T$}
\State $P_t \gets \mathrm{argmax}_{P \in \mathcal C_n} \langle P,\nabla F_S(A_{t-1}) \rangle$
\State $A_t \gets (1-\gamma_t)A_{t-1} + \gamma_t P_t$
\EndFor
\State \Return $A_T$
\end{algorithmic}
\caption{Constrained Frank-Wolfe over $\mathcal D_n(\mathcal C)$}
\label{alg:constrained FW}
\end{algorithm}

\subsection{Key Properties}

The constrained version maintains two key properties of the original framework:

\begin{enumerate}
    \item \textbf{Continuity}: The decomposition remains continuous as long as the constraint set $\mathcal C_n$ is fixed and the matching problem under constraints can be solved efficiently. This follows from the same argument as in the unconstrained case, where we fix an ordering over the constrained permutations. The continuity of the decomposition implies that the extension $F(A) = \sum_k \alpha_k f(P_k)$ of any function $f: \mathcal C_n \to \mathbb R$ is also continuous.
    
    \item \textbf{Rounding}: The constrained version inherits all the rounding properties proved for the original Birkhoff extension, as the the extension is still a convex combination of permutations. 
\end{enumerate}

\subsection{Compatible Constraint Types}

The framework can handle any constraint set $\mathcal C_n$ for which the matching problem $\mathrm{argmax}_{P \in \mathcal C_n} \langle P,S \rangle$ can be solved efficiently. A key class of constraints that satisfy this requirement are those whose linear programming relaxation has integral solutions. This means that when we relax the permutation constraints to allow for doubly stochastic matrices, the optimal solution of the constrained matching problem is still a permutation matrix. This property ensures that the matching problem can be solved efficiently using standard linear programming algorithms, and the solution will automatically satisfy the permutation constraints.

Most notably this includes LPs with a unimodular constraint matrix and an integer right-hand sides \cite{schrijver2003combinatorial}. The integrality of the LP relaxation is a powerful property that guarantees the existence of efficient algorithms for solving the matching problem, making these constraints particularly well-suited for our framework.

A simple but important example is the case of "don't match" constraints, where certain pairs of elements are forbidden from being matched. This can be expressed by setting the corresponding entries in the score matrix $S$ to $-\infty$, effectively making these matches impossible. The LP relaxation of this constrained matching problem has integral solutions, as the optimal solution will never select a forbidden match. This basic case demonstrates how the framework can handle even simple constraints while maintaining its key properties.

\section{Truncated Birkhoff Extension}

\label{app:truncated}

In practice, computing the full $S$-induced Birkhoff decomposition can be expensive. We introduce a truncated and normalized variant that uses only the first $K$ non-zero terms of the decomposition while preserving the convex-combination interpretation.

\begin{definition}[Truncated, normalized Birkhoff extension]
    Let $S$ be an identifying score matrix and let $(\alpha_k(A), P_k(A))_{k=1}^{M}$ be the non-zero terms in the $S$-induced Birkhoff decomposition of $A \in \mathcal D_n$, ordered as in Alg.\ \ref{alg:cont birkhoff}. For any $K \in \{1, \dots, M\}$, define the truncated normalization factor
    \begin{equation}
        Z_K(A) = \sum_{k=1}^{K} \alpha_k(A),
    \end{equation}
    and the normalized coefficients
    \begin{equation}
        \tilde{\alpha}_k^{(K)}(A) =
        \begin{cases}
            \dfrac{\alpha_k(A)}{Z_K(A)} & \text{if } k \leq K, \\
            0 & \text{if } k > K.
        \end{cases}
    \end{equation}
    The truncated Birkhoff extension of $f: \mathcal P_n \to \mathbb R$ at level $K$ is
    \begin{equation}
        F_S^{(K)}(A) = \sum_{k=1}^{K} \tilde{\alpha}_k^{(K)}(A)\, f\big(P_k(A)\big),
    \end{equation}
    which satisfies $\sum_{k=1}^{K} \tilde{\alpha}_k^{(K)}(A) = 1$ by construction.
\end{definition}

\begin{theorem}[Escaping Local Minima for Truncated Extension]\label{thm:escape-trunc}
Let $f: \mathcal{P}_n \to \mathbb{R}$ and let $F^{(K)}_S$ be the truncated, normalized Birkhoff extension at level $K>1$. There exists a score matrix $S$ such that any $A \in \mathcal D_n$ is not a local minimum of $F^{(K)}_S$.
\end{theorem}
\begin{proof}
The proof parallels that of Thm. \ref{thm:escape}. Choose an identifying score matrix $S$ and choose it so that some $P^*\in\arg\min_{P} f(P)$ is the unique top-scoring permutation under $S$.
Let the $S$-ordered decomposition of $A$ be $A=\sum_{i\ge1}\alpha_i P_i$.
Define the path $A(\beta)=(1-\beta)A+\beta P^*$, $\beta\in[0,1]$.

Let $J$ be the set of the top-$K$ permutations (by $S$) at $\beta=0$ \emph{after} inserting $P^*$ if needed:
if $P^*\notin\{P_1,\dots,P_K\}$ then $J=\{P^*\}\cup\{P_1,\dots,P_{K-1}\}$; otherwise $J=\{P_1,\dots,P_K\}$.
Write
\[
S_J=\sum_{P\in J}\alpha_P,\qquad
T_J=\sum_{P\in J}\alpha_P\,f(P),\qquad
\bar f_J(A)=T_J/S_J.
\]
Since $S$ is identifying, the $S$-order is strict; hence for all sufficiently small $\beta>0$, the top-$K$ set along $A(\beta)$ remains $J$ and all coefficients of terms in $J$ are scaled by $(1-\beta)$ while $P^*$ gains $+\beta$. Therefore
\[
F_S^{(K)}\big(A(\beta)\big)=\frac{(1-\beta)\,T_J+\beta\,f(P^*)}{(1-\beta)\,S_J+\beta}.
\]
Differentiating at $\beta=0^+$ gives
\[
\left.\frac{d}{d\beta}F_S^{(K)}\big(A(\beta)\big)\right|_{\beta=0}
=\frac{f(P^*)-\bar f_J(A)}{S_J}\;\le\;0,
\]
since $f(P^*)=\min_P f(P)\le \bar f_J(A)$. If the inequality is strict, $P^*-A$ is a descent direction and $A$ is not a local minimum. If equality holds, then $\bar f_J(A)=f(P^*)$, so every $P\in J$ already attains the global value and $F_S^{(K)}(A)=f(P^*)$, i.e., $A$ already achieves the global minimum of $F_S^{(K)}$. This proves the claim for $K\ge 2$.
\end{proof}

\begin{remark}[Truncated preservation of properties]\label{rem:trunc-properties}
All properties established for score-induced Birkhoff extensions continue to hold for the truncated, normalized extension $F^{(K)}_S$:
\begin{itemize}[leftmargin=*, itemsep=2pt, parsep=0pt, partopsep=0pt, topsep=0pt]
  
    \item \textbf{Efficient computation (Prop.\ \ref{prop: computation}).} Running Alg.\ \ref{alg:cont birkhoff} for $K$ iterations yields the truncated decomposition and $F^{(K)}_S$ in $O(K\,n^3)$ time (via $K$ maximum-weight matchings), which is $\le O(n^5)$ as in the full case.
    \item \textbf{Minima correspondence (Prop.\ \ref{prop:minima}).} For any permutation $P$, $F^{(K)}_S(P)=f(P)$. For any $A$, $F^{(K)}_S(A)$ is a convex combination of $\{f(P_k(A))\}_{k\le K}$, so $F^{(K)}_S(A) \ge \min_{P\in\mathcal P_n} f(P)$. Thus $\min_{A\in\mathcal D_n}F^{(K)}_S(A)=\min_{P\in\mathcal P_n} f(P)$, and any minimizer belongs to $\mathrm{Conv}(\mathrm{argmin}_{P\in\mathcal P_n} f(P))$ by the same argument as Prop.\ \ref{prop:minima}.
    \item \textbf{Rounding and approximation (Prop.\ \ref{prop:rounding}).} Using the $K$ permutations returned by the truncated decomposition, define $\mathrm{round}_S^{(K)}(A)=\text{argmin}_{k\le K} f(P_k(A))$. Since $F^{(K)}_S(A)$ is an average over $\{f(P_k(A))\}_{k\le K}$, we have $f(\mathrm{round}_S^{(K)}(A))\le F^{(K)}_S(A)$. The $C$-approximation transfer follows identically. For the score-update guarantee (Prop.\ \ref{prop:rounding}–2), if $S'$ is sufficiently close to $P^*$, then $P^*$ is the top-scoring permutation under $S'$ and thus appears as the first term, so in particular among the first $K$ terms for any $K\ge 1$; the same argument goes through.
    \item \textbf{Dynamic score and escaping local minima.} Thm.\ \ref{thm:escape-trunc} provides the truncated analogue of Thm.\ \ref{thm:escape}; the proof follows the same linearity-along-$P^*$ direction combined with normalization.
\end{itemize}
\end{remark}

\section{Global Lipschitz bound for Birkhoff extension}\label{app:lipschitz}

We now provide an explicit Lipschitz bound for the (non-truncated) score-induced Birkhoff extension that is independent of the choice of score matrix.

\begin{theorem}[Score-matrix–independent Lipschitz bound]
Let $f: \mathcal P_n \to \mathbb R$ and define the oscillation $\Delta_f = \max_{P} f(P) - \min_{P} f(P)$. For any identifying score matrix $S$ and any $A, A' \in \mathcal D_n$,
\begin{equation}\label{eq:l1-lip}
    |F_S(A) - F_S(A')| \le \tfrac{\Delta_f}{2}\, \|A - A'\|_1.
\end{equation}
In particular, the Lipschitz constant in \eqref{eq:l1-lip} is independent of $S$.
\end{theorem}
\begin{proof}
By Prop.\ \ref{prop:continuity}, $F_S$ is a.e. differentiable on $\mathcal D_n$ with
\[
\nabla_A F_S(A) = \sum_{\ell \in L_+} (\nabla_A \alpha_\ell(A))\, f(P_\ell), \quad L_+ = \{\ell : \alpha_\ell(A) > 0\}.
\]
From Def.\ \ref{def:cont birkhoff}, $\alpha_\ell(A)$ is the minimum over the $n$ entries on the support of $P_\ell$ after subtracting earlier terms. Hence, at points of differentiability, $\nabla_A\alpha_\ell$ is supported on a single entry (the active minimum), and different $\ell$ have disjoint supports along the decomposition path. Moreover, since $\sum_{\ell} \alpha_\ell(A) = 1$ for all $A$, we have $\sum_{\ell} \nabla_A \alpha_\ell(A) = 0$ entrywise. Indeed, the map $A \mapsto \sum_{\ell} \alpha_\ell(A)$ is the constant function $1$, so its directional derivative in any direction $H$ is zero: $0 = \sum_{\ell} \langle \nabla_A\alpha_\ell(A), H\rangle$. As this holds for every $H$, it follows that the matrix $\sum_{\ell} \nabla_A \alpha_\ell(A)$ must be identically zero. Using this, for any scalar $c$,
\[
\nabla_A F_S(A) = \sum_{\ell \in L_+} \nabla_A \alpha_\ell(A)\, (f(P_\ell) - c).
\]
Choosing $c = \tfrac{\max_P f(P) + \min_P f(P)}{2}$ centers the values so that $|f(P_\ell) - c| \le \Delta_f/2$ for all $\ell$. Because each entry of $\nabla_A F_S(A)$ receives contribution from at most one $\ell$ (disjoint supports), it follows that
\[
\|\nabla_A F_S(A)\|_\infty \le \tfrac{\Delta_f}{2}.
\]
 Along the segment $A_t=(1-t)A+tA'$, the fundamental theorem of calculus gives
\[
F_S(A')-F_S(A)=\int_0^1 \langle \nabla_A F_S(A_t),\, A'-A\rangle\,dt,
\]
where $\langle\cdot,\cdot\rangle$ denotes the Frobenius inner product. Then $|\langle X,Y\rangle| \le \|X\|_\infty\,\|Y\|_1$ for matrices $X,Y$ yields 
\[
|F_S(A')-F_S(A)| \le \int_0^1 \|\nabla_A F_S(A_t)\|_\infty\,dt\; \|A'-A\|_1 \le \tfrac{\Delta_f}{2}\,\|A'-A\|_1,
\]
where we have applied the uniform bound on $\|\nabla_A F_S\|_\infty$ a.e. along the segment. 
\end{proof}

\section{GPU friendly maximum perfect matching}\label{app:GPU}
We present a suboptimal randomized algorithm for the Maximum Weight Perfect Matching problem in bipartite graphs, with a runtime of $O(n^3)$. While this problem can be solved deterministically in $O(n^3)$ time, and more efficiently in $O(n^\omega)$ time using randomized algorithms with low failure probability \cite{mucha2004maximum, sankowski2009maximum}, our goal is different: we aim to design an algorithm that is simple to implement and efficient on a GPU. Our algorithm runs in $O(n^3)$ time and returns a perfect matching with weight at least half of the optimal.

Let $G = (U, V, E)$ be a bipartite graph, where $|U| = |V| = n$, $U = \{u_1,...,u_n\}$, $V = \{v_1,...,v_n\}$. Let $A(G)_{i,j}$ be a random number from $\{1, ..., R\}$ if $(u_i,v_j)\in E$, otherwise $A(G)_{i,j}=0$. Here $R$ is chosen to be large enough e.g., $R=n^{O(1)}$ and the matrix $A$ is known as random adjacency matrix of $G$. The matrix $A(G)$ has nice properties some of which we use here to design our algorithm. First  the rank of $A(G)$ is at most equal to the size of the maximum matching and the equality holds with probability at least $1-n/R$. More importantly, if $G$ has a perfect matching then with high probability $\text{det}(A(G))\neq 0$, and so $A(G)$ is invertible. Rabin and Vazirani \cite{rabin1989maximum} showed that

\begin{theorem}
    With high probability, $A(G)^{-1}_{i,j}\neq 0 $ if and only if graph $G - \{u_i , v_j \}$ has a perfect matching.
\end{theorem}
In particular for edge $e=(u_i,v_j)$, $A(G)^{-1}_{i,j}\neq 0 $ if and only if $e$ is \emph{allowed}, i.e. is contained in a perfect matching. We use this result and the techniques developed by \cite{mucha2004maximum} to design our greedy algorithm for finding a maximum weight perfect matching. We note that in the worst case our algorithm returns a perfect matching with weight at least 1/2 of the optimal one.

\begin{algorithm}[H] 
    \begin{algorithmic}
\Require Bipartite graph $G$
\State $B= A(G)^{-1}$
\State $M=\emptyset$

\For{$c = 1 \cdots n$}

    \State Find maximum weight edge $(u_c , v_r)$ that is an allowed edge in $G-M$ i.e. row $r$, not yet eliminated, and such that $B_{r,c}\neq 0$ and $A(G)_{c,r}\neq 0$.
   
    \State eliminate the $r$-th row and the $c$-th column of $B$
    \State Add $(u_c , v_r)$ to $M$
\EndFor

\State \Return $M$
\end{algorithmic}
\caption{2-approximation maximum weight perfect matching}
\label{alg:matching}
\end{algorithm}
\begin{theorem}
    Algorithm \ref{alg:matching} runs in time $O(n^3)$ and with high probability returns a perfect matching $M$ such that the weight of $M$ is at least 1/2 times the weight of the optimal matching.
\end{theorem}
\begin{proof}
The proof that Algorithm \ref{alg:matching} runs in time $O(n^3)$ and with high probability returns a perfect matching follows from \cite{mucha2004maximum}. The approximation guarantee follows from our greedy selection among the feasible edges at each iteration.
\end{proof}

\section*{NeurIPS Paper Checklist}

\begin{enumerate}

\item {\bf Claims}
    \item[] Question: Do the main claims made in the abstract and introduction accurately reflect the paper's contributions and scope?
    \item[] Answer: \answerYes{} 
    \item[] Justification: We claim to provide an extension (Sec.\ \ref{sec:Birk}), optimization algorithm (Sec.\ \ref{subsec:opt}), and experiments (Sec.\ \ref{sec:experiments}), all of which are clearly given in the paper. 
    \item[] Guidelines:
    \begin{itemize}
        \item The answer NA means that the abstract and introduction do not include the claims made in the paper.
        \item The abstract and/or introduction should clearly state the claims made, including the contributions made in the paper and important assumptions and limitations. A No or NA answer to this question will not be perceived well by the reviewers. 
        \item The claims made should match theoretical and experimental results, and reflect how much the results can be expected to generalize to other settings. 
        \item It is fine to include aspirational goals as motivation as long as it is clear that these goals are not attained by the paper. 
    \end{itemize}

\item {\bf Limitations}
    \item[] Question: Does the paper discuss the limitations of the work performed by the authors?
    \item[] Answer: \answerYes{} 
    \item[] Justification: The limitations are discussed in the concluding section of the paper (Sec.\ \ref{sec:conc}). 
    \item[] Guidelines:
    \begin{itemize}
        \item The answer NA means that the paper has no limitation while the answer No means that the paper has limitations, but those are not discussed in the paper. 
        \item The authors are encouraged to create a separate "Limitations" section in their paper.
        \item The paper should point out any strong assumptions and how robust the results are to violations of these assumptions (e.g., independence assumptions, noiseless settings, model well-specification, asymptotic approximations only holding locally). The authors should reflect on how these assumptions might be violated in practice and what the implications would be.
        \item The authors should reflect on the scope of the claims made, e.g., if the approach was only tested on a few datasets or with a few runs. In general, empirical results often depend on implicit assumptions, which should be articulated.
        \item The authors should reflect on the factors that influence the performance of the approach. For example, a facial recognition algorithm may perform poorly when image resolution is low or images are taken in low lighting. Or a speech-to-text system might not be used reliably to provide closed captions for online lectures because it fails to handle technical jargon.
        \item The authors should discuss the computational efficiency of the proposed algorithms and how they scale with dataset size.
        \item If applicable, the authors should discuss possible limitations of their approach to address problems of privacy and fairness.
        \item While the authors might fear that complete honesty about limitations might be used by reviewers as grounds for rejection, a worse outcome might be that reviewers discover limitations that aren't acknowledged in the paper. The authors should use their best judgment and recognize that individual actions in favor of transparency play an important role in developing norms that preserve the integrity of the community. Reviewers will be specifically instructed to not penalize honesty concerning limitations.
    \end{itemize}

\item {\bf Theory assumptions and proofs}
    \item[] Question: For each theoretical result, does the paper provide the full set of assumptions and a complete (and correct) proof?
    \item[] Answer: \answerYes{} 
    \item[] Justification: Assumptions are stated in theorem statements and all proofs are provided in the appendix. 
    \item[] Guidelines:
    \begin{itemize}
        \item The answer NA means that the paper does not include theoretical results. 
        \item All the theorems, formulas, and proofs in the paper should be numbered and cross-referenced.
        \item All assumptions should be clearly stated or referenced in the statement of any theorems.
        \item The proofs can either appear in the main paper or the supplemental material, but if they appear in the supplemental material, the authors are encouraged to provide a short proof sketch to provide intuition. 
        \item Inversely, any informal proof provided in the core of the paper should be complemented by formal proofs provided in appendix or supplemental material.
        \item Theorems and Lemmas that the proof relies upon should be properly referenced. 
    \end{itemize}

    \item {\bf Experimental result reproducibility}
    \item[] Question: Does the paper fully disclose all the information needed to reproduce the main experimental results of the paper to the extent that it affects the main claims and/or conclusions of the paper (regardless of whether the code and data are provided or not)?
    \item[] Answer: \answerYes{} 
    \item[] Justification: We have made code and data public (or reference the data source where appropriate). It is available at \href{https://github.com/luckyjackluo/BE-for-Combinatorial-Optimization}{https://github.com/luckyjackluo/BE-for-Combinatorial-Optimization}. 

    \item[] Guidelines:
    \begin{itemize}
        \item The answer NA means that the paper does not include experiments.
        \item If the paper includes experiments, a No answer to this question will not be perceived well by the reviewers: Making the paper reproducible is important, regardless of whether the code and data are provided or not.
        \item If the contribution is a dataset and/or model, the authors should describe the steps taken to make their results reproducible or verifiable. 
        \item Depending on the contribution, reproducibility can be accomplished in various ways. For example, if the contribution is a novel architecture, describing the architecture fully might suffice, or if the contribution is a specific model and empirical evaluation, it may be necessary to either make it possible for others to replicate the model with the same dataset, or provide access to the model. In general. releasing code and data is often one good way to accomplish this, but reproducibility can also be provided via detailed instructions for how to replicate the results, access to a hosted model (e.g., in the case of a large language model), releasing of a model checkpoint, or other means that are appropriate to the research performed.
        \item While NeurIPS does not require releasing code, the conference does require all submissions to provide some reasonable avenue for reproducibility, which may depend on the nature of the contribution. For example
        \begin{enumerate}
            \item If the contribution is primarily a new algorithm, the paper should make it clear how to reproduce that algorithm.
            \item If the contribution is primarily a new model architecture, the paper should describe the architecture clearly and fully.
            \item If the contribution is a new model (e.g., a large language model), then there should either be a way to access this model for reproducing the results or a way to reproduce the model (e.g., with an open-source dataset or instructions for how to construct the dataset).
            \item We recognize that reproducibility may be tricky in some cases, in which case authors are welcome to describe the particular way they provide for reproducibility. In the case of closed-source models, it may be that access to the model is limited in some way (e.g., to registered users), but it should be possible for other researchers to have some path to reproducing or verifying the results.
        \end{enumerate}
    \end{itemize}

\item {\bf Open access to data and code}
    \item[] Question: Does the paper provide open access to the data and code, with sufficient instructions to faithfully reproduce the main experimental results, as described in supplemental material?
    \item[] Answer: \answerYes{} 
    \item[] Justification: All data and code are public. The link to the repository is given in the previous answer. 
    \item[] Guidelines:
    \begin{itemize}
        \item The answer NA means that paper does not include experiments requiring code.
        \item Please see the NeurIPS code and data submission guidelines (\url{https://nips.cc/public/guides/CodeSubmissionPolicy}) for more details.
        \item While we encourage the release of code and data, we understand that this might not be possible, so "No" is an acceptable answer. Papers cannot be rejected simply for not including code, unless this is central to the contribution (e.g., for a new open-source benchmark).
        \item The instructions should contain the exact command and environment needed to run to reproduce the results. See the NeurIPS code and data submission guidelines (\url{https://nips.cc/public/guides/CodeSubmissionPolicy}) for more details.
        \item The authors should provide instructions on data access and preparation, including how to access the raw data, preprocessed data, intermediate data, and generated data, etc.
        \item The authors should provide scripts to reproduce all experimental results for the new proposed method and baselines. If only a subset of experiments are reproducible, they should state which ones are omitted from the script and why.
        \item At submission time, to preserve anonymity, the authors should release anonymized versions (if applicable).
        \item Providing as much information as possible in supplemental material (appended to the paper) is recommended, but including URLs to data and code is permitted.
    \end{itemize}

\item {\bf Experimental setting/details}
    \item[] Question: Does the paper specify all the training and test details (e.g., data splits, hyperparameters, how they were chosen, type of optimizer, etc.) necessary to understand the results?
    \item[] Answer: \answerYes{} 
    \item[] Justification: This information is reported in sections \ref{sec:experiments} and \ref{exp:detail}. In particular, the choice of score matrix is often complex and variable and we provide full details for this. 
    \item[] Guidelines:
    \begin{itemize}
        \item The answer NA means that the paper does not include experiments.
        \item The experimental setting should be presented in the core of the paper to a level of detail that is necessary to appreciate the results and make sense of them.
        \item The full details can be provided either with the code, in appendix, or as supplemental material.
    \end{itemize}

\item {\bf Experiment statistical significance}
    \item[] Question: Does the paper report error bars suitably and correctly defined or other appropriate information about the statistical significance of the experiments?
    \item[] Answer: \answerNo{} 
    \item[] Justification: We do not report error bars or statistical significance due to computational and time constraints; as is common in combinatorial optimization, we report mean results over multiple problem instances.
    \item[] Guidelines:
    \begin{itemize}
        \item The answer NA means that the paper does not include experiments.
        \item The authors should answer "Yes" if the results are accompanied by error bars, confidence intervals, or statistical significance tests, at least for the experiments that support the main claims of the paper.
        \item The factors of variability that the error bars are capturing should be clearly stated (for example, train/test split, initialization, random drawing of some parameter, or overall run with given experimental conditions).
        \item The method for calculating the error bars should be explained (closed form formula, call to a library function, bootstrap, etc.)
        \item The assumptions made should be given (e.g., Normally distributed errors).
        \item It should be clear whether the error bar is the standard deviation or the standard error of the mean.
        \item It is OK to report 1-sigma error bars, but one should state it. The authors should preferably report a 2-sigma error bar than state that they have a 96\% CI, if the hypothesis of Normality of errors is not verified.
        \item For asymmetric distributions, the authors should be careful not to show in tables or figures symmetric error bars that would yield results that are out of range (e.g. negative error rates).
        \item If error bars are reported in tables or plots, The authors should explain in the text how they were calculated and reference the corresponding figures or tables in the text.
    \end{itemize}

\item {\bf Experiments compute resources}
    \item[] Question: For each experiment, does the paper provide sufficient information on the computer resources (type of compute workers, memory, time of execution) needed to reproduce the experiments?
    \item[] Answer: \answerYes{} 
    \item[] Justification: These details are given in Sec.\ \ref{exp:detail}. 
    \item[] Guidelines:
    \begin{itemize}
        \item The answer NA means that the paper does not include experiments.
        \item The paper should indicate the type of compute workers CPU or GPU, internal cluster, or cloud provider, including relevant memory and storage.
        \item The paper should provide the amount of compute required for each of the individual experimental runs as well as estimate the total compute. 
        \item The paper should disclose whether the full research project required more compute than the experiments reported in the paper (e.g., preliminary or failed experiments that didn't make it into the paper). 
    \end{itemize}
    
\item {\bf Code of ethics}
    \item[] Question: Does the research conducted in the paper conform, in every respect, with the NeurIPS Code of Ethics \url{https://neurips.cc/public/EthicsGuidelines}?
    \item[] Answer: \answerYes{} 
    \item[] Justification: Given the focus on foundations, we do not have human subjects or sensitive data and we do not perceive risks of harm or misuse. 
    \item[] Guidelines:
    \begin{itemize}
        \item The answer NA means that the authors have not reviewed the NeurIPS Code of Ethics.
        \item If the authors answer No, they should explain the special circumstances that require a deviation from the Code of Ethics.
        \item The authors should make sure to preserve anonymity (e.g., if there is a special consideration due to laws or regulations in their jurisdiction).
    \end{itemize}

\item {\bf Broader impacts}
    \item[] Question: Does the paper discuss both potential positive societal impacts and negative societal impacts of the work performed?
    \item[] Answer: \answerNA{} 
    \item[] Justification: The societal impact of improved combinatorial optimization or neural combinatorial optimization techniques are minimal. 
    \item[] Guidelines:
    \begin{itemize}
        \item The answer NA means that there is no societal impact of the work performed.
        \item If the authors answer NA or No, they should explain why their work has no societal impact or why the paper does not address societal impact.
        \item Examples of negative societal impacts include potential malicious or unintended uses (e.g., disinformation, generating fake profiles, surveillance), fairness considerations (e.g., deployment of technologies that could make decisions that unfairly impact specific groups), privacy considerations, and security considerations.
        \item The conference expects that many papers will be foundational research and not tied to particular applications, let alone deployments. However, if there is a direct path to any negative applications, the authors should point it out. For example, it is legitimate to point out that an improvement in the quality of generative models could be used to generate deepfakes for disinformation. On the other hand, it is not needed to point out that a generic algorithm for optimizing neural networks could enable people to train models that generate Deepfakes faster.
        \item The authors should consider possible harms that could arise when the technology is being used as intended and functioning correctly, harms that could arise when the technology is being used as intended but gives incorrect results, and harms following from (intentional or unintentional) misuse of the technology.
        \item If there are negative societal impacts, the authors could also discuss possible mitigation strategies (e.g., gated release of models, providing defenses in addition to attacks, mechanisms for monitoring misuse, mechanisms to monitor how a system learns from feedback over time, improving the efficiency and accessibility of ML).
    \end{itemize}
    
\item {\bf Safeguards}
    \item[] Question: Does the paper describe safeguards that have been put in place for responsible release of data or models that have a high risk for misuse (e.g., pretrained language models, image generators, or scraped datasets)?
    \item[] Answer: \answerNA{} 
    \item[] Justification: The paper does not contain data or models that have a high risk for misuse. 
    \item[] Guidelines:
    \begin{itemize}
        \item The answer NA means that the paper poses no such risks.
        \item Released models that have a high risk for misuse or dual-use should be released with necessary safeguards to allow for controlled use of the model, for example by requiring that users adhere to usage guidelines or restrictions to access the model or implementing safety filters. 
        \item Datasets that have been scraped from the Internet could pose safety risks. The authors should describe how they avoided releasing unsafe images.
        \item We recognize that providing effective safeguards is challenging, and many papers do not require this, but we encourage authors to take this into account and make a best faith effort.
    \end{itemize}

\item {\bf Licenses for existing assets}
    \item[] Question: Are the creators or original owners of assets (e.g., code, data, models), used in the paper, properly credited and are the license and terms of use explicitly mentioned and properly respected?
    \item[] Answer: \answerYes{} 
    \item[] Justification: We provide citations to all data and code we use. All material is licensed for such use. 
    \item[] Guidelines:
    \begin{itemize}
        \item The answer NA means that the paper does not use existing assets.
        \item The authors should cite the original paper that produced the code package or dataset.
        \item The authors should state which version of the asset is used and, if possible, include a URL.
        \item The name of the license (e.g., CC-BY 4.0) should be included for each asset.
        \item For scraped data from a particular source (e.g., website), the copyright and terms of service of that source should be provided.
        \item If assets are released, the license, copyright information, and terms of use in the package should be provided. For popular datasets, \url{paperswithcode.com/datasets} has curated licenses for some datasets. Their licensing guide can help determine the license of a dataset.
        \item For existing datasets that are re-packaged, both the original license and the license of the derived asset (if it has changed) should be provided.
        \item If this information is not available online, the authors are encouraged to reach out to the asset's creators.
    \end{itemize}

\item {\bf New assets}
    \item[] Question: Are new assets introduced in the paper well documented and is the documentation provided alongside the assets?
    \item[] Answer: \answerNA{} 
    \item[] Justification: There are no new assets released in this paper. 
    \item[] Guidelines:
    \begin{itemize}
        \item The answer NA means that the paper does not release new assets.
        \item Researchers should communicate the details of the dataset/code/model as part of their submissions via structured templates. This includes details about training, license, limitations, etc. 
        \item The paper should discuss whether and how consent was obtained from people whose asset is used.
        \item At submission time, remember to anonymize your assets (if applicable). You can either create an anonymized URL or include an anonymized zip file.
    \end{itemize}

\item {\bf Crowdsourcing and research with human subjects}
    \item[] Question: For crowdsourcing experiments and research with human subjects, does the paper include the full text of instructions given to participants and screenshots, if applicable, as well as details about compensation (if any)? 
    \item[] Answer: \answerNA{} 
    \item[] Justification: There were no human subjects in this work. 
    \item[] Guidelines:
    \begin{itemize}
        \item The answer NA means that the paper does not involve crowdsourcing nor research with human subjects.
        \item Including this information in the supplemental material is fine, but if the main contribution of the paper involves human subjects, then as much detail as possible should be included in the main paper. 
        \item According to the NeurIPS Code of Ethics, workers involved in data collection, curation, or other labor should be paid at least the minimum wage in the country of the data collector. 
    \end{itemize}

\item {\bf Institutional review board (IRB) approvals or equivalent for research with human subjects}
    \item[] Question: Does the paper describe potential risks incurred by study participants, whether such risks were disclosed to the subjects, and whether Institutional Review Board (IRB) approvals (or an equivalent approval/review based on the requirements of your country or institution) were obtained?
    \item[] Answer: \answerNA{} 
    \item[] Justification: No human subjects or study participants were used in this research. 
    \item[] Guidelines:
    \begin{itemize}
        \item The answer NA means that the paper does not involve crowdsourcing nor research with human subjects.
        \item Depending on the country in which research is conducted, IRB approval (or equivalent) may be required for any human subjects research. If you obtained IRB approval, you should clearly state this in the paper. 
        \item We recognize that the procedures for this may vary significantly between institutions and locations, and we expect authors to adhere to the NeurIPS Code of Ethics and the guidelines for their institution. 
        \item For initial submissions, do not include any information that would break anonymity (if applicable), such as the institution conducting the review.
    \end{itemize}

\item {\bf Declaration of LLM usage}
    \item[] Question: Does the paper describe the usage of LLMs if it is an important, original, or non-standard component of the core methods in this research? Note that if the LLM is used only for writing, editing, or formatting purposes and does not impact the core methodology, scientific rigorousness, or originality of the research, declaration is not required.
    \item[] Answer: \answerNA{} 
    \item[] Justification: LLMs were not used in the core methods of this research. 
    \item[] Guidelines:
    \begin{itemize}
        \item The answer NA means that the core method development in this research does not involve LLMs as any important, original, or non-standard components.
        \item Please refer to our LLM policy (\url{https://neurips.cc/Conferences/2025/LLM}) for what should or should not be described.
    \end{itemize}

\end{enumerate}

\end{document}